\newcommand{\concat}{{+}\!\!{+}}
\newcommand{\stype}{\mathfrak{S}}
\newcommand{\stypet}{$\mathfrak{S}$}
\newtheorem{theorem}{Theorem}
\newtheorem{proposition}{Proposition}
\newtheorem{lemma}{Lemma}
\newtheorem{definition}{Definition}
\newtheorem{remark}{Remark}
  \providecommand\BibTeX{{%
    \normalfont B\kern-0.5em{\scshape i\kern-0.25em b}\kern-0.8em\TeX}}}
\begin{document}

\title{A Constructive, Type-Theoretic Approach to \\ Regression via Global Optimisation}

\author{Dan R. Ghica}
\author{Todd Waugh Ambridge}

\affil{School of Computer Science, University of Birmingham, UK}
\date{}
\maketitle

\begin{abstract}
  We examine the connections between deterministic, complete, and general global optimisation of continuous functions and a general concept of regression from the perspective of constructive type theory via the concept of `searchability'. We see how the property of convergence of global optimisation is a straightforward consequence of searchability. The abstract setting allows us to generalise searchability and continuity to higher-order functions, so that we can formulate novel convergence criteria for regression, derived from the convergence of global optimisation. All the theory and the motivating examples are fully formalised in the proof assistant \textsc{Agda}. 
\end{abstract}

\section{Introduction}

For some given \emph{objective function} $f$ and set of equalities, inequalities, or arbitrary constraints $S$, the central goal of \emph{global optimisation} is to compute, with mathematical guarantees, the global minimum of $f$ subject to $S$. Global optimisation has numerous obvious applications in all areas of engineering and computational sciences, as it gives a general recipe for solving problems of arbitrary complexity. As an area of research, the study of global optimisation algorithms is mature, with a recent survey indicating more than twenty textbooks and research monographs in the last few decades~\cite{floudas2009review}.

Global optimisation algorithms fall under several categories, but in this paper we will focus on algorithmis that are:
\begin{description}
\item[General:] Algorithms may take into account information about the shape of the function. For example, the minimisation of functions with convex envelopes is intensively studied~\cite{tawarmalani2001semidefinite}. In contrast, we will make minimal assumptions of this nature.
\item[Complete:] An incomplete algorithm makes no guarantees regarding the quality of the solution it arrives at, focussing on efficiency via sophisticated heuristics, rather than correctness. The typical example of an incomplete algorithm is \emph{gradient descent}, which will only find a local minimum of a function~\cite{ruder2016overview}. In contrast, we provide mathematical guarantees a solution is indeed optimal within some margin of error.
\item[Deterministic:] A randomized algorithm can offer an asymptotic guarantee that the optimum is reached, with probability one, without actually knowing when it has been reached~\cite{SolisWets81}. In contrast, we will give strong termination guarantees for the algorithm.
\item[Continuous:] Many global optimisation algorithms deal with discrete problems, such as branch-and-bound~\cite{lawler1966branch}. In contrast, we will focus on the minimisation of continuous functions. 
\end{description}
To summarise, in this paper we will concentrate on \emph{general, complete, continuous, deterministic global search}, which finds one guaranteed optimal-within-epsilon global minimum of a continuous function~\cite{neumaier2004complete}. In the sequel, by `optimisation' this is what we mean, precisely. 

The first important results in the area relevant to our work appeared in the 1960s and 70s, the optimisation of rational functions using interval arithmetic by Moore and Young~\cite{mooreyoung59}, which was then generalised to Lipschitz-continuous functions by Piyavskii~\cite{piyavskii1972algorithm}. The idea of the algorithm is rather simple. By splitting the domain of the function into intervals, we impose a certain degree of precision on the horisontal axis. The Lipschitz constant will then bound the growth of the function on each interval, thus allowing us to calculate a precision on the vertical axis. In effect, we can `discretise' the function with known precision along both imput and output, which will make the problem decidable. It also allows the application of efficient discrete algorithms such as branch-and-bound for continuous optimisation~\cite{skelboe1974computation}.

One of the important and immediate applications of optimisation is \emph{regression}, broadly construed. It means finding some parameters for a model so that a target error (loss) function is minimised. This connection is so intuitive and obvious that it is rather surprising that it is not expressed more emphatically in the literature. This broad formulation of regression captures not just conventional regression problems (linear regression, polynomial regression, etc.) but virtually all machine learning algorithms that are sometimes referred to as `curve fitting'~\cite{pearl2018build}. 

\paragraph{A new approach: searchable types}

The inspiration for our new approach to global search and regression is in earlier work on ‘searchability’~\cite{DBLP:conf/lics/Escardo07,DBLP:journals/lmcs/Escardo08}, concerning the construction of algorithms (‘selection functions’) for finding elements in compact spaces satisfying a (computable) predicate. 
Finite sets are trivially compact, and so are trivially searchable. However, certain infinite sets are also searchable by Tychonoff's theorem, which states that the product space of any set of compact spaces is itself compact.
The infinite product of a set $X$ is given by the function space $\mathbb N\to X$, whose elements are infinitary sequences of elements of $X$. These infinitary sequences are therefore, in a certain sense, searchable; which is somewhat surprising.
This development is particularly interesting in the context of constructive real numbers, as the computable elements of compact intervals of $\mathbb R$ can be represented as infinitary sequences of digits taken from a finite set $D$.
In this work, a constructive Tychonoff-style theorem is utilised to search these representation spaces $\mathbb N \to D$ of constructive real numbers relative to certain explicit continuity conditions.

\paragraph{Contributions}

Our paper establishes new connections between several areas: global optimisation, regression, searcheable types, and constructive real numbers. This is the most important contribution. 

Our paper also makes a technical contribution to the study of searcheable types by adding an explicit requirement of continuity to the key theorems which, which allows us to formulate our key proofs in a way that is compatible with proof assistants, namely \textsc{Agda}, based on constructive type theory. This means that the entirety of our proofs are fully formalised. 

Another significant contribution is a more general methodological perspective on global optimisation and especially regression. In fact, the bulk of our paper is spent on regression, as formulated in our type-theoretic framework for searchability. The advantage of the type theoretic framework is that we can generalise the formulation of convergence of global search from $\mathbb R^n$ to the more general concept of searching on $\stype$-types, our own version of searchable types. 

Our first result is straightforward (Thm.~\ref{thm:min}); that regression can be formulated as a global minimisation property, which has a deterministic, optimal-within-epsilon solution. However, we note that this is not an actual convergence property for regression, in the sense that the Weierstrass theorem follows by interpolation (see~\cite{PINKUS20001} for an informal survey of this issue). Regression, unlike interpolation, relies on a prior assumption for the model which, if wrong, will not converge no matter how precisely we calculate its parameters. So Thm.~\ref{thm:min} only states that a solution converges on a `best guess'. 

Thus, what we give is a theorem which states, in a general setting, what it means for a regression algorithm to converge \emph{absolutely}. We distinguish between `perfect' models, which are the same as the function we aim to model (the `oracle'), provided some parameters are given the right values, and `imperfect' models in which that is not the case. One of the challenges here is to formulate the right notions of approximation between models, not just between parameters. The requisite functions, namely a loss function between models and a distortion function from models to models, are higher order. The abstract type-theoretic setting is essential here in formulating the right notions of continuity which make the theorems true. 

The most general version of convergence are Thm.~\ref{thm:congstyle} and Thm.~\ref{thm:imperfect}, which characterise the convergence of regression for an `imperfect' model. Informally, the former says that for whenever the imperfect model and the oracle are `approximately equal' the parameters of the model can be computed so that the error between the model and the oracle is approximately the same as the error introduced by the distortion function. The latter says that if the loss between an distorted oracle and the oracle is less than some $\epsilon$ then so is the loss between the regressed model and the distorted oracle. Both theorems capture the same idea: the error introduced by a `bad guess' of a model bounds the error between the regressed model and the oracle.  As an immediate consequence (Thm.~\ref{thm:perfect}) if the model is perfect (i.e. the distortion function is the identity) then the loss between the oracle and the model converges on zero.

We give some examples, mainly to show that the definitions we provide ($\stype$-types and continuity) can accommodate standard examples.

The framework that we have built for this perspective is formalised in the \textsc{Agda} programming language, which allows us to give computable (but practically inefficient) algorithms for our version of optimisation.

\section{Technical preliminaries}

\subsection{Formal proofs}

To maintain a high assurance for correctness all our main results and most of our examples are proved formally using \textsc{Agda}~\cite{DBLP:conf/tphol/BoveDN09}.
The proofs can be found online\footnote{\url{https://github.com/tnttodda/RegressionInTypeArxiv}}. We use certain options to ensure a high standard of consistency and compatibility. The `safe' option of \textsc{Agda}  disables features that may lead to possible inconsistencies, such as type-in-type or experimental or exotic constructs. This option also prevents the local disabling of termination checking. It is our explicit requirement of continuity conditions that allows all proofs to go through without violating termination, unlike prior proofs in the literature~\cite{CompactTypes}.
We also turn off the K axiom to ensure compatibility with type theories that are incompatible with `uniqueness of identity' proofs, such as homotopy type theory. Finally, using the `exact split' clause we force the type-checker require that all clauses in a definition hold as definitional equalities.  Our proofs requires several basic types and related properties found in Escard\'o's \textit{TypeTopology} library\footnote{\url{https://github.com/martinescardo/TypeTopology}}.

The bulk of the proofs of this section are in the \texttt{SearchableTypes} module, which contains annotations cross-referenced against this text. To make the presentation accessible to readers without a background in \textsc{Agda} the mathematical statements in our paper are formulated in a conventional, informal yet rigorous, mathematical vernacular. To aid the readers who are interested in formal proof details each mathematical proof is labelled with the \textsc{Agda} function formalising it.

\subsection{\stypet-types}
\label{sec:tech}

This section concerns the definition and properties of `$\stype$-types', which are used to develop the concept of Escard\'o's \textit{searchable types}. These types define the spaces in which regression can take place.

\begin{definition}[\texttt{SearchableTypes.ST-Type}]
An $\stype$-type is defined inductively as a finite non-empty type, the product of two $\stype$-types $S\times S'$, or the type of functions $\mathbb N\to S$, where $S$ is an $\stype$-type.
\end{definition}

The key technical challenge of our approach is to define a notion of \textit{(uniform) continuity} for $\stype$-types, where continuity of a function is broadly understood as `finite amounts of output only require finite amounts of input'. In this context, whenever we deal with infinite data the \textit{precision} of our observation comes into play. In the case of \stypet-types, infinite data comes from the type with shape $\mathbb N\to S$. It is natural to think of such data as sequences, which leads to a natural notion of precision-up-to-$m$ as observing the first $m$ elements of the sequence. This notion of equivalence induces the usual ultrametric on such sequences, from which we can derive a reliable definition of uniform continuity.

We generalise this intuitive notion of precision to \stypet-types as follows. First, the way we measure precision depends on the type at which we measure it; we call the type of precisions for a given type its \textit{exactness type} (the elements of this type are \textit{precisions}).
For finite data we do not afford degrees of precision, so its exactness type is the unit type. For product types we take the product of the two exactness types point-wise. Finally, for functions from $\mathbb N$ we record a natural number, which is the precision at that level, paired with the precision in the domain of the function.

\begin{definition}[\texttt{SearchableTypes.ST-Moduli}]\label{def:typemod}
The \emph{exactness type} of an $\mathfrak S$-type is defined inductively as:
\begin{itemize}
\item The exactness type of a finite set is the unit type.
\item The exactness type of a finite product of $\mathfrak S$-types is the product of their exactness types.
\item The exactness type of a function ${\mathbb N}\to S$, where $S$ is a $\mathfrak S$-type, is the product of $\mathbb N$ with the exactness type of $S$.
\end{itemize}
\end{definition}

Precision as defined above can be used to qualify equality between elements of \stypet-types. For finite data equality is not qualified by precision, and for products it is taken component-wise. For sequences $\mathbb N\to S$, equality with precision $(n, p)$, where $p$ has the exactness type of $S$, is interpreted as observing only the first $n$ elements of the sequence, with each element observed up to precision~$p$.

\begin{definition}[\texttt{SearchableTypes.ST-$\approx$}] \leavevmode
\begin{itemize}
\item Two elements of a finite set are said to be \textit{equal with precision} $p$ just if they are equal, for any $p$.
\item Two elements of a product of $\mathfrak S$-types are \textit{equal with precision} $(p_1, p_2)$, if their $i$th projections are equal with precision~$p_i$.
\item Two elements of ${\mathbb N}\to S$, with $S$ an $\mathfrak S$-type, are \textit{equal with precision }$(m, p)$ if all elements in the $m$-size prefixes are equal with precision~$p$.
\end{itemize}
\end{definition}

Note that in the definition above the types of the precision depends on the $\mathfrak S$-type as spelled out in Def.~\ref{def:typemod}. If $x,y$ are equal with precision $p$, we write $x\equiv_p y$.

The concept of `equality with precision $p$' can be adapted to predicates as \textit{logical equivalence  with precision $p$} ($\Leftrightarrow_p$) in the obvious way (formally, \texttt{SearchableTypes.ST-$\approx_p$}).

The following properties are immediate.

\begin{proposition}[\texttt{SearchableTypes.ST-$\approx$-EquivRel}]\label{prop:equivrel}
Equality with precision $p$ is an equivalence relation.
\end{proposition}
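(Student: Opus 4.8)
The claim is that for any $\mathfrak{S}$-type $S$ and any precision $p$ in its exactness type, the relation $\equiv_p$ on $S$ is reflexive, symmetric, and transitive. Since $\mathfrak{S}$-types and exactness types are defined by mutual/parallel induction, the natural strategy is structural induction on the $\mathfrak{S}$-type, proving all three properties simultaneously (or, equally cleanly, proving each of reflexivity, symmetry, transitivity by its own induction — but bundling them avoids repeating the case analysis three times).

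Let me write out the plan. First I would set up the induction on the structure of $S$. **Base case** ($S$ a finite non-empty type): here $x \equiv_p y$ is definitionally just $x = y$ (propositional equality in the type), regardless of $p$, so the three properties are exactly reflexivity, symmetry, and transitivity of $=$, which hold in the ambient type theory (in Agda, `refl`, `sym`, `trans` from the identity type). **Product case** ($S = S_1 \times S_2$ with precision $(p_1, p_2)$): by definition $x \equiv_{(p_1,p_2)} y$ iff $\pi_1 x \equiv_{p_1} \pi_1 y$ and $\pi_2 x \equiv_{p_2} \pi_2 y$; each conjunct is an instance of the inductive hypothesis for $S_1$ (resp. $S_2$) at precision $p_1$ (resp. $p_2$), and a conjunction of equivalence relations is an equivalence relation, so reflexivity/symmetry/transitivity transfer componentwise. **Function case** ($S = \mathbb{N} \to S'$ with precision $(m, p')$): by definition $x \equiv_{(m,p')} y$ iff for every $i < m$ we have $x(i) \equiv_{p'} y(i)$; this is a bounded universal quantification over instances of the inductive hypothesis for $S'$ at precision $p'$, and a pointwise (bounded-indexed) family of equivalence relations is again an equivalence relation — reflexivity holds pointwise, symmetry swaps each pointwise witness, transitivity composes them pointwise using a shared bound $m$.

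There is essentially no hard step here — the proposition is, as the text says, immediate — but the one place requiring a little care is threading the inductive hypotheses at the correct precision argument: in each inductive case the sub-precisions ($p_i$ in the product case, $p'$ in the function case) are the ones at which the IH must be invoked, and one must be careful that the IH is stated with the precision universally quantified (i.e. "for all $p$, $\equiv_p$ is an equivalence relation on $S'$") so that it applies at the particular sub-precision arising in the decomposition. In the Agda formalisation this is just the recursive call `ST-≈-EquivRel` at the structurally smaller type and the appropriate precision; informally it amounts to observing that the class of equivalence relations is closed under finite products and under arbitrary (here, bounded) indexed pointwise intersections, and that propositional equality is an equivalence relation, which together cover the three constructors of an $\mathfrak{S}$-type.
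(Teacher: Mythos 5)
Your proposal is correct and matches the intended argument: the paper itself treats this as immediate and the Agda proof \texttt{SearchableTypes.ST-$\approx$-EquivRel} is exactly the structural induction on the $\mathfrak{S}$-type you describe, with the identity type handling the finite case and the product and sequence cases reducing componentwise and pointwise (over the bounded prefix) to the inductive hypothesis at the sub-precisions. Your remark about quantifying the inductive hypothesis over all precisions is the one detail worth being careful about, and you have handled it correctly.
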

So, immediately, equality implies equality with precision $p$, for any $p$.

We are now in a position to introduce continuity for \textit{predicates} on \stypet-types. A predicate is said to be continuous if its argument only needs to be examined up to precision $p$ in order to yield an answer. Obviously, the type of $p$ is the exactness type of the argument. 

\begin{definition}[\texttt{SearchableTypes.continuous}]
  We say that a predicate $Q$ on an \stypet-type $S$ is \textit{continuous} if there exists a precision $q$ in the exactness type of $S$, such that for all $x,x':S$, whenever $x\equiv_q x'$ we also have $Q(x)\Rightarrow Q(x')$.

  We call $q$ the \emph{modulus of continuity} (MoC) of $Q$. 
\end{definition}

The same intuition applies to functions. 
\begin{definition}[\texttt{SearchableTypes.continuous$^2$}]\label{def:funccont}
  A function $f:S\to S'$, with $S, S'$ \stypet-types, is said to be \textit{continuous} if for any precision $p$ in the exactness type of $S$ there exists a precision $q$ in the exactness type of $S'$ such that for all $x,x':S$ if $x\equiv_q x'$ then $f(x)\equiv_p f(x')$.

  We call $q$ the MoC of $f$ for $p$. 
\end{definition}

\newcommand{\otype}{$\mathfrak O$}
Note that types of shape $S\to S'$ with $S, S'$ \stypet-types are not themselves \stypet-types, but they are an important class of types, which we shall call \otype-types (\textit{oracle types}, usually ranged over by the variable~$Y$).

Certain helpful properties of continuity are immediate:
\begin{proposition}[\texttt{SearchableTypes.all-$\mathbb{F}$-preds-continuous}]
All predicates and functions on finite types are continuous.
\end{proposition}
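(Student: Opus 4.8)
The plan is to exploit the fact that a finite type sits at the base case of \emph{both} inductive definitions involved: the exactness type (Def.~\ref{def:typemod}) and equality-with-precision (Def.~\texttt{SearchableTypes.ST-$\approx$}). By Def.~\ref{def:typemod} the exactness type of a finite type $S$ is the unit type, which has a single inhabitant, say $\star$; and by the first clause of Def.~\texttt{SearchableTypes.ST-$\approx$}, two elements $x, x' : S$ satisfy $x \equiv_p x'$ (for the only available precision, namely $\star$) precisely when $x = x'$ outright. So on finite types ``equality with precision'' carries no information beyond plain equality, and the existential quantifier over moduli appearing in the definitions of continuity is always witnessed by $\star$. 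Everything then reduces to substituting equals for equals.

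For a predicate $Q$ on a finite type $S$, I would offer $\star$ as the candidate modulus of continuity. Given $x, x' : S$ with $x \equiv_\star x'$, the base clause of Def.~\texttt{SearchableTypes.ST-$\approx$} yields $x = x'$, so $Q(x)$ and $Q(x')$ are literally the same proposition and in particular $Q(x) \Rightarrow Q(x')$. This discharges continuity of $Q$ directly from the definition.

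For a function $f : S \to S'$ out of a finite type $S$ (with $S'$ an \stypet-type, so that Def.~\ref{def:funccont} applies), given an arbitrary precision $p$ in the exactness type of $S'$ I would again supply $\star$ --- the unique element of the exactness type of $S$ --- as the modulus of continuity for $p$. For $x, x' : S$ with $x \equiv_\star x'$ we again obtain $x = x'$, hence $f(x) = f(x')$; then Prop.~\ref{prop:equivrel} (equality implies equality with precision $p$) gives $f(x) \equiv_p f(x')$, which is exactly what is required.

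There is essentially no real obstacle here; the statement is a direct unfolding of the definitions. The only points needing a little care are (i) recognising that the exactness type of a finite type is the unit type, so its existential over precisions is inhabited by $\star$, and (ii) reading off from the base clause of $\equiv_p$ that precision is vacuous on finite types, so that $x\equiv_q x'$ collapses to $x = x'$. In the \textsc{Agda} development this should amount to a one-line proof in each case, essentially after pattern-matching on the (unique) precision and appealing to reflexivity.
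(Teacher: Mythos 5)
Your proof is correct and matches the paper's intent exactly: the paper states this proposition without proof as ``immediate'', and your unfolding --- the exactness type of a finite type is the unit type, $\equiv_\star$ collapses to plain equality, and then substitution plus reflexivity (via Prop.~\ref{prop:equivrel}) discharges both the predicate and function cases --- is precisely the argument it leaves implicit.
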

\begin{proposition}[\texttt{SearchableTypes.$\circ$-continuous}]\label{thm:composition}
If $f:S\to S'$ and $g:S'\to S''$ are continuous then so is $g\circ f:S\to S''$.
\end{proposition}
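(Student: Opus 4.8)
The plan is to unfold the definition of continuity (Def.~\ref{def:funccont}) for the composite $g\circ f$ and build the required modulus by chaining the moduli supplied by $f$ and $g$. Concretely, suppose $f:S\to S'$ and $g:S'\to S''$ are continuous, and let $p$ be an arbitrary precision in the exactness type of $S''$. First I would apply continuity of $g$ at $p$ to obtain a precision $q'$ in the exactness type of $S'$ which is a MoC of $g$ for $p$; that is, for all $y,y':S'$, if $y\equiv_{q'}y'$ then $g(y)\equiv_p g(y')$. Next I would apply continuity of $f$ at this $q'$ to obtain a precision $q$ in the exactness type of $S$ which is a MoC of $f$ for $q'$; that is, for all $x,x':S$, if $x\equiv_q x'$ then $f(x)\equiv_{q'} f(x')$.

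The claim is then that $q$ is a MoC of $g\circ f$ for $p$. To verify this, take any $x,x':S$ with $x\equiv_q x'$. By the defining property of $q$ we get $f(x)\equiv_{q'}f(x')$, and then by the defining property of $q'$, instantiated at $y:=f(x)$ and $y':=f(x')$, we get $g(f(x))\equiv_p g(f(x'))$, i.e. $(g\circ f)(x)\equiv_p(g\circ f)(x')$. Since $p$ was arbitrary, this exhibits, for every $p$, a precision $q$ with the required property, which is exactly what Def.~\ref{def:funccont} demands of $g\circ f$.

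There is essentially no obstacle here: the proof is a routine transitive composition of the two modulus-of-continuity witnesses, and the only care needed is bookkeeping to ensure the precisions are threaded through the correct exactness types (the output precision $p$ lives in the exactness type of $S''$, the intermediate precision $q'$ in that of $S'$, and the final input precision $q$ in that of $S$). The one point worth a remark is that the construction is fully constructive and computable — given the (constructive) moduli of $g$ and $f$ as functions on precisions, the modulus of $g\circ f$ at $p$ is just the $f$-modulus applied to the $g$-modulus applied to $p$ — which is precisely why this lemma is stated as an \emph{immediate} consequence and why it formalises without difficulty in \textsc{Agda}.
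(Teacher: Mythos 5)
Your proof is correct and is exactly the argument the paper has in mind (the paper states this proposition as immediate and gives no explicit proof): compose the moduli, taking $q$ to be the $f$-modulus at the $g$-modulus at $p$, with the precisions living in the exactness types of $S$, $S'$, and $S''$ respectively. Nothing is missing.
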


We are now ready to introduce the concept of \textit{searchability}.

A predicate is said to be \textit{detachable} if it is always decidable, i.e. either it or its negation holds.
Note that a detachable predicate is essentially a function to a two-element type, i.e. Booleans.

\newcommand{\searcher}{$\mathscr E$}
\newcommand{\searcherm}{\mathscr E}

\begin{definition}[\texttt{SearchableTypes.searcher}]
A searcher \searcher\ on an \stypet-type $S$ is a function which given a detachable and continuous predicate on $S$ returns a \textit{witness} element of $S$, for which the predicate holds if such an element exists.
\end{definition}
Since the searcher is a well-defined function, it will always return an element of $S$ even if a witness, i.e. an element satisfying the predicate, does not exist. In that case the searcher will just return some arbitrary element of $S$.

\begin{remark}
  In the \textsc{Agda} code the definition above has two parts, also involving \texttt{SearchableTypes.\\search-condition}, which spells out what it means for a witness to satisfy the predicate.
\end{remark}

We will usually denote a searcher by  \searcher.

\begin{definition}[\texttt{SearchableTypes.continuous-searcher}]
A searcher \searcher\ on $S$ is said to be \textit{continuous} if whenever given  predicates which are equivalent with precision $p$, $P\Leftrightarrow_p Q$ it returns witnesses which are equal with precision $p$, $\searcherm (P)\cong_p\searcherm(Q)$.
\end{definition}

An \stypet-type is said to be \textit{continuously searchable} if any continuous and detachable predicate on it has a continuous searcher.

We are now building towards the main theorem of this section, that all \stypet-types are in fact continuously searchable.

\begin{lemma}[\texttt{SearchableTypes.finite-ST-searchable}]\label{lem:fintych}
	All finite non-empty types are continuously searchable.
\end{lemma}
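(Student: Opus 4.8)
The plan is to build the searcher by exhaustive linear scan over the (finite) enumeration of the type, and then to observe that its required continuity is essentially vacuous, since the exactness type of a finite type is the unit type. Concretely, fix a non-empty finite type $S$; by definition it comes with an enumeration $s_0, s_1, \dots, s_n$ of its elements (equivalently, an equivalence $S\simeq\mathrm{Fin}(n+1)$), and non-emptiness supplies a distinguished default element $s_0$. Given a detachable, continuous predicate $Q$ on $S$ — note the continuity hypothesis is automatically satisfied here by \texttt{all-$\mathbb F$-preds-continuous}, so it carries no information — detachability makes each $Q(s_i)$ decidable, and I define $\searcherm(Q)$ by recursion over $s_0,\dots,s_n$: return the first $s_i$ with $Q(s_i)$ true, and return the default $s_0$ if there is none. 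In the formalisation this is recursion on $n$ transported along $S\simeq\mathrm{Fin}(n+1)$.

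Next I would discharge the search condition: if some $x:S$ satisfies $Q$, then $\searcherm(Q)$ satisfies $Q$. Indeed $x=s_j$ for some $j$; the scan either halts at some $s_i$ with $i<j$ (then $Q(s_i)$ holds by construction) or reaches $s_j$ and halts there since $Q(s_j)$ holds — so in either case the returned element satisfies $Q$. When no witness exists the condition is vacuously true, and the searcher still returns the well-defined default $s_0$.

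Then I would verify continuity of the searcher. The exactness type of a finite type is the unit type (Def.~\ref{def:typemod}), so the sole precision is $\star$; equality with precision $\star$ on $S$ is plain equality, hence $\cong_\star$ on witnesses is equality, and $P\Leftrightarrow_\star Q$ unfolds to pointwise logical equivalence $\forall x.\,P(x)\Leftrightarrow Q(x)$. Since the scan defining $\searcherm$ inspects only the truth values $P(s_0),P(s_1),\dots$ (via their decidability), and by hypothesis these coincide with $Q(s_0),Q(s_1),\dots$, the two scans stop at the same index and return the same element, so $\searcherm(P)=\searcherm(Q)$, i.e.\ $\searcherm(P)\cong_\star\searcherm(Q)$.

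The only mildly delicate point is bookkeeping rather than mathematics: one must thread the equivalence $S\simeq\mathrm{Fin}(n+1)$ carefully so that the recursively defined scan, the witness property, and the continuity argument all align — and, in the \textsc{Agda} development, so that every defining clause holds as a definitional equality, as \texttt{exact split} demands. I do not expect a genuine obstacle: the content of the lemma is just that finiteness permits an exhaustive search and that finite types admit no nontrivial notion of precision.
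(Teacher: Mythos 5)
Your proposal is correct and follows essentially the same route as the paper: an exhaustive search defined by recursion on the size of the finite type (your ``first witness in the enumeration, else a default'' is the paper's $\mathscr E_{n+1}$ up to which element serves as the default), with the search condition discharged by the case analysis you give and continuity rendered trivial because the exactness type of a finite type is the unit, so predicate equivalence with precision $\star$ is pointwise equivalence and the scan depends only on the (matching) truth values. The paper leaves the verification as ``laborious but routine''; you have merely spelled it out.
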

\begin{proof}
In the case of finite (non-empty) types we use induction on the size of the type. For singletons the proof is immediate, with the searcher always returning the unique element. The continuity of this searcher and the fact that it is a proper searcher are immediate. In the inductive case, given a searcher $\mathscr E_{n}$ for a set of size $n$ and some predicate $Q$ we construct a new searcher for the finite type $Fin_{n+1}=Fin_n + \{*\}$ which behaves like the old searcher if it finds a $Q$ witness and returns the additional element $inr(*)$ otherwise:
\[
\mathscr E_{n+1}(Q)=\begin{cases}
\mathscr E_n(Q) & \text{if $Q(\mathscr E_n(Q))$} \\
inr(*) & \text{otherwise}.
\end{cases}
\]
Checking that this is a continuous searcher is laborious but routine.
\end{proof}

\begin{lemma}[\texttt{SearchableTypes.product-ST-searchable}]
The product of two continuously searchable \stypet-types is continuously searchable.
\end{lemma}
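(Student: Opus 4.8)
The plan is to build the searcher for $S\times S'$ from the searchers $\searcherm_S$ and $\searcherm_{S'}$ guaranteed by hypothesis, using a standard ``curried search'' construction. Given a continuous detachable predicate $Q$ on $S\times S'$, for each fixed $x:S$ define the section predicate $Q_x(y):=Q(x,y)$ on $S'$; this is detachable because $Q$ is, and it is continuous because if $q=(q_1,q_2)$ is the MoC of $Q$ then $q_2$ serves as a MoC for $Q_x$ (from $y\equiv_{q_2}y'$ we get $(x,y)\equiv_{q}(x,y')$ by reflexivity in the first component and Prop.~\ref{prop:equivrel}, hence $Q(x,y)\Rightarrow Q(x,y')$). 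So $\searcherm_{S'}(Q_x)$ is a legitimate call. Now define the ``marginal'' predicate $\bar Q(x):=Q(x,\searcherm_{S'}(Q_x))$ on $S$, search it with $\searcherm_S$ to get $x_0:=\searcherm_S(\bar Q)$, and return the pair $(x_0,\searcherm_{S'}(Q_{x_0}))$.

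First I would verify the search condition: if $Q$ has a witness $(a,b)$, then $Q_a(b)$ holds, so by the search condition for $\searcherm_{S'}$ the element $\searcherm_{S'}(Q_a)$ satisfies $Q_a$, i.e.\ $\bar Q(a)$ holds; hence $\bar Q$ has a witness, so $\searcherm_S(\bar Q)=x_0$ satisfies $\bar Q$, which says exactly that $(x_0,\searcherm_{S'}(Q_{x_0}))$ satisfies $Q$. Next I would check that $\bar Q$ is itself continuous and detachable so that the call $\searcherm_S(\bar Q)$ is legitimate: detachability is inherited from $Q$, and for continuity I would take $q_1$ (the first component of the MoC of $Q$) as a candidate MoC --- but here one must be careful, because $\bar Q$ involves $\searcherm_{S'}$ applied to a predicate that \emph{depends on} $x$, so establishing $x\equiv_{q_1}x'\Rightarrow(\bar Q(x)\Rightarrow\bar Q(x'))$ requires knowing that $Q_x$ and $Q_{x'}$ are logically equivalent at the relevant precision and then invoking continuity of $\searcherm_{S'}$ to conclude $\searcherm_{S'}(Q_x)\cong_{q_2}\searcherm_{S'}(Q_{x'})$, followed by a final use of continuity of $Q$. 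This interplay is exactly where the \emph{continuity} of the component searchers (not merely their existence) is indispensable.

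Finally I would establish continuity of the composite searcher: given $P\Leftrightarrow_p Q$ on $S\times S'$ with $p=(p_1,p_2)$, the section predicates satisfy $P_x\Leftrightarrow_{p_2}Q_x$ for every $x$, so continuity of $\searcherm_{S'}$ gives $\searcherm_{S'}(P_x)\cong_{p_2}\searcherm_{S'}(Q_x)$; combining this with $P\Leftrightarrow_p Q$ one shows $\bar P\Leftrightarrow_{p_1}\bar Q$, whence continuity of $\searcherm_S$ yields $x_0\cong_{p_1}x_0'$ for the two $S$-witnesses; a last application of continuity of $\searcherm_{S'}$ together with the continuity of the predicates then gives that the second components agree up to $p_2$, so the returned pairs are equal with precision $(p_1,p_2)$.

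I expect the main obstacle to be the continuity argument for the marginal predicate $\bar Q$ and, relatedly, the final continuity bookkeeping for the composite searcher: because $\bar Q(x)$ packages a call to $\searcherm_{S'}$ on an $x$-indexed predicate, one cannot simply read off a modulus from $Q$ alone but must chain together the MoC of $Q$, the continuity of $\searcherm_{S'}$, and Prop.~\ref{prop:equivrel} in the right order. As the author notes for the finite case, each individual step is routine, but there are several moving parts and the dependency of the inner predicate on the outer variable is what makes the explicit continuity hypotheses do real work.
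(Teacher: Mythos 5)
Your construction is exactly the one in the paper: search the second coordinate with the section predicate $\lambda y.Q(x,y)$, then search the first coordinate with the marginal predicate $\lambda x.Q(x,\hat y(x))$, and return the resulting pair. The verification steps you spell out (the search condition, the continuity of the marginal predicate via the continuity of $\searcherm_{S'}$, and the continuity of the composite searcher) are precisely the details the paper dismisses as ``laborious but routine,'' and your account of them is sound.
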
\label{lem:prodtych}
\begin{proof}
In the case of the product of two $\mathfrak S$-types $S\times S'$ which are searchable, we need to construct a searcher for predicate $Q$ which returns as pair a witness $(x_\epsilon,y_\epsilon):S\times S'$. Let $\mathscr E_S$ and $\mathscr E_{S'}$ be the searchers for the two types. The witnesses are computed by:
\begin{align*}
\hat y(x) &= \mathscr E_{S'}(\lambda y. Q(x,y)) \\
x_\epsilon &= \mathscr E_S(\lambda x.Q(x, \hat y(x))\\
y_\epsilon &= \hat y(x_\epsilon).
\end{align*}
These computations are obviously continuous, and the formal proof is straightforward. Verifying that these values satisfy the conditions of a correct searcher is laborious but routine.
\end{proof}
\begin{remark}
In the previous two theorems the details of checking that the defined searchers meet the required conditions are intricate, but they are also routine in a way such that our proof-assistant (\textsc{Agda}) can also make the task easy. Because of this, our reliance on a proof assistant is not onerous but, in fact, beneficial, improving the productivity of the mathematics.
\end{remark}

The previous two lemmas are perhaps unsurprising, since finite types and binary products can be searched exhaustively and component-wise, respectively. The surprising fact is that the type of infinitary sequences satisfies the same property.

Before we proceed to the main result, we note that

\begin{lemma}[\texttt{SearchableTypes.tail-decrease-mod}]
  \label{lem:cons-decrease-modulus2}
  For any natural number $n$, if a predicate $P(\alpha)$ over $S$-sequences, with $S$ an \stypet-type, has modulus of continuity $(n+1,p)$ then predicate $P(x::\alpha)$ has modulus of continuity $(n, p)$, for any $x:S$.
\end{lemma}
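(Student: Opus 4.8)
The plan is to unfold the definition of modulus of continuity and reduce the claim to a small index-bookkeeping fact about prefixes of consed sequences. Write $P'$ for the predicate $\alpha \mapsto P(x :: \alpha)$ on $S$-sequences; since possessing an MoC is exactly the witnessed form of continuity, it suffices to show that $(n,p)$ is an MoC for $P'$, i.e. that $\alpha \equiv_{(n,p)} \alpha'$ implies $P'(\alpha) \Rightarrow P'(\alpha')$ for all $S$-sequences $\alpha,\alpha'$.

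The key auxiliary observation is: for any $x:S$ and any $S$-sequences $\alpha,\alpha'$, if $\alpha \equiv_{(n,p)} \alpha'$ then $x::\alpha \equiv_{(n+1,p)} x::\alpha'$. Unfolding the definition of equality with precision $(n+1,p)$ on sequences, this is the statement that the $(n+1)$-prefixes of $x::\alpha$ and $x::\alpha'$ agree entrywise with precision $p$: at index $0$ both entries are $x$, hence equal with precision $p$ by reflexivity of $\equiv_p$ (Prop.~\ref{prop:equivrel}); at an index $j+1$ with $j<n$ the entries are $\alpha(j)$ and $\alpha'(j)$, which are equal with precision $p$ precisely because $j<n$ and $\alpha \equiv_{(n,p)} \alpha'$. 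In the formalisation the cons operation is arranged so that these prefix identities hold definitionally, which is what lets the `exact split' discipline go through. Granting this, the lemma follows at once: from $\alpha \equiv_{(n,p)} \alpha'$ and $P'(\alpha)$, i.e. $P(x::\alpha)$, the auxiliary fact gives $x::\alpha \equiv_{(n+1,p)} x::\alpha'$, and then the hypothesis that $(n+1,p)$ is an MoC for $P$ yields $P(x::\alpha')$, i.e. $P'(\alpha')$.

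I do not expect a genuine obstacle here. The only point requiring care is the index shift in the auxiliary fact — matching the $j$th entry of the tail of a consed sequence with its $(j+1)$th entry, and handling the boundary between head and tail correctly — which is entirely routine but must be carried out precisely so that the bound $j<n$ lines up with the hypothesis $\alpha \equiv_{(n,p)} \alpha'$.
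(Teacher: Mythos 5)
Your proof is correct and is essentially the only natural argument: the paper itself gives no textual proof of Lemma~\ref{lem:cons-decrease-modulus2} (deferring to the \textsc{Agda} formalisation), and what you write — unfolding the MoC definition and reducing to the auxiliary fact that $\alpha \equiv_{(n,p)} \alpha'$ implies $x::\alpha \equiv_{(n+1,p)} x::\alpha'$, with the head handled by reflexivity of $\equiv_p$ and the tail by the index shift — is exactly what the formal proof does. No gaps.
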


\begin{lemma}[\texttt{SearchableTypes.tychonoff}]\label{lem:seqtych}
Sequences of continuously searchable \stypet-types are continuously searchable.
\end{lemma}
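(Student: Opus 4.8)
The plan is to proceed by induction on the first component of the modulus of continuity of the predicate, mirroring the recursion structure suggested by Lemma~\ref{lem:cons-decrease-modulus2}. Since every continuously searchable type is inhabited (apply its searcher to the always-true predicate), the sequence type $\mathbb N\to S$ is inhabited, which supplies a ``fallback'' element. Concretely I would prove the auxiliary statement: \emph{for every $n:\mathbb N$ and every precision $p$ in the exactness type of $S$, every detachable predicate on $S$-sequences with modulus of continuity $(n,p)$ admits a continuous searcher}; the lemma then follows by unpacking the modulus carried by an arbitrary continuous detachable predicate.

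For the base case $n=0$, a predicate $P$ of modulus $(0,p)$ cannot distinguish any two sequences (any two agree on their length-$0$ prefix), so $P$ is constantly true or constantly false, and returning a fixed constant sequence built from the fallback element of $S$ is a correct searcher, trivially continuous since its output does not depend on the predicate. For the inductive step, assume the statement for $n$ and let $P$ have modulus $(n+1,p)$. By Lemma~\ref{lem:cons-decrease-modulus2}, for each $x:S$ the predicate $P(x::{-})$ on tails has modulus $(n,p)$, so the induction hypothesis yields a continuous searcher for it; write $\hat\beta(x)$ for the witness it returns. Set
\[
x_\epsilon \;=\; \mathscr E_S\big(\lambda x.\,P(x::\hat\beta(x))\big), \qquad \alpha_\epsilon \;=\; x_\epsilon :: \hat\beta(x_\epsilon),
\]
where $\mathscr E_S$ is the searcher of $S$. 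Two checks make this legal and correct. First, $\lambda x.\,P(x::\hat\beta(x))$ is detachable (inherited from $P$) and continuous with modulus $p$: if $x\equiv_p x'$ then $P(x::{-})\equiv_{(n,p)}P(x'::{-})$ as predicates on tails (since $P$ examines the head only up to $p$), hence $\hat\beta(x)\equiv_{(n,p)}\hat\beta(x')$ by continuity of the recursive searcher \emph{in its predicate argument}, and so $P(x::\hat\beta(x))\Leftrightarrow P(x'::\hat\beta(x'))$. Second, correctness: if $P(\alpha)$ holds, write $\alpha=\alpha_0::\alpha'$ with $\alpha'$ the tail; then $\alpha'$ witnesses $P(\alpha_0::{-})$, so $P(\alpha_0::\hat\beta(\alpha_0))$ holds by correctness of the recursive searcher, so $\lambda x.\,P(x::\hat\beta(x))$ has a witness, so $P(x_\epsilon::\hat\beta(x_\epsilon))=P(\alpha_\epsilon)$ holds by correctness of $\mathscr E_S$.

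It remains to verify continuity of the assembled searcher: given $P\Leftrightarrow_{(m,q)}P'$, we want $\alpha_\epsilon\equiv_{(m,q)}\alpha'_\epsilon$. Using monotonicity of equality- and equivalence-with-precision in the precision argument, we may assume $(m,q)$ is at least as precise as the moduli of $P$ and $P'$; then for every $x$ we get $P(x::{-})\Leftrightarrow_{(m-1,q)}P'(x::{-})$, the induction hypothesis gives $\hat\beta(x)\equiv_{(m-1,q)}\hat\beta'(x)$, the two head-predicates are therefore $\Leftrightarrow_q$-equivalent, continuity of $\mathscr E_S$ gives $x_\epsilon\equiv_q x'_\epsilon$, and assembling the head with the prefix-shortened tails gives the claim. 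I expect the main obstacle to be precisely this last bookkeeping together with the continuity check of $\lambda x.\,P(x::\hat\beta(x))$: both couple continuity of the recursive searcher \emph{in its predicate argument} with the head-dependence of the tail, and threading the precisions correctly through the two layers of the induction hypothesis and through $\mathscr E_S$ is intricate --- ``laborious but routine'', in the terminology the authors use for the finite and product cases, and the reason the explicit continuity data is what lets \textsc{Agda}'s termination checker accept the recursion on $n$.
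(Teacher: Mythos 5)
Your proposal is correct and follows the same overall strategy as the paper --- induction on the first projection of the modulus of continuity, a vacuous base case at $0$, and Lemma~\ref{lem:cons-decrease-modulus2} to justify the recursive call --- but the two constructions nest the searches in opposite orders, and the difference is worth noting. The paper's witness is built by first letting $\mathscr E_S$ choose a best head for each candidate tail, $\hat x(\alpha)=\mathscr E_S(\lambda x.Q(x\concat\alpha))$, and then making the \emph{recursive} searcher the outer search, applied to $\lambda\alpha.Q(\hat x(\alpha)\concat\alpha)$; there the burden is to show that this composite tail-predicate still has modulus $(n,p)$, which needs Lemma~\ref{lem:cons-decrease-modulus2} \emph{together with} the continuity of $\mathscr E_S$ in its predicate argument. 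You instead make the recursive searcher the inner search (one call per head $x$, yielding $\hat\beta(x)$) and let $\mathscr E_S$ be the outer search on $\lambda x.P(x\concat\hat\beta(x))$; this matches the shape of the product case (Lem.~\ref{lem:prodtych}), lets Lemma~\ref{lem:cons-decrease-modulus2} apply to $P(x\concat{-})$ directly without any continuity side-condition, and shifts the continuity burden onto the recursively obtained searcher, exactly as you identify. The two variants are mirror images with symmetric proof obligations; your correctness and continuity arguments are sound at the same level of detail as the paper's (both defer the final precision bookkeeping, which is indeed the genuinely intricate part, to a ``laborious but routine'' induction), and your version is arguably the more uniform one given how the product lemma is stated.
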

\begin{proof}
In this case, we need to construct a searcher $\mathscr E$ for predicate Q which returns a witness $\alpha_\epsilon : \mathbb N \to S$. Let $\mathscr E_S$ be the searcher for $S$. We proceed by induction on the first projection of the modulus of continuity of $Q$, $n:\mathbb N$: 
  
  For $n=0$, we can return any element as it will vacuously satisfy the predicate. For example, $\alpha_\epsilon = \lambda n. \mathscr E_S(\lambda x.{\mathds{1}})$.
  
  For the inductive step we construct the witness like so:
\begin{align*}
\hat x(\alpha) = \mathscr E_S (\lambda x.Q(x :: \mathcal \alpha)) \\
\alpha_t = \mathscr E (\lambda \alpha.Q(\hat x(\alpha) :: \alpha)) \\
x_\epsilon = \hat x(\alpha_t) \\
\alpha_\epsilon = x_\epsilon :: \alpha_t
\end{align*}
  $\alpha_t$ is constructed using the inductive hypothesis: by Lem.~\ref{lem:cons-decrease-modulus2}, the first projection of the MoC of the searched predicate is one less than $n$. It is laborious but routine to show that the two predicates searched here are detachable and continuous.
  
  While the formal proof may look daunting, proving this witness satisfies the predicate is intuitively straightforward. Verifying that the constructed searcher $\mathscr E$ is continuous is somewhat complex, but follows from the continuity of $\mathscr E_S$ by induction on the modulus of continuity of the predicates involved.
\end{proof}

From Lem.~\ref{lem:fintych}-\ref{lem:seqtych} the key result of this section follows immediately:

\begin{theorem}[\texttt{SearchableTypes.all-ST-searchable}]
All \stypet-types are continuously searchable.
\end{theorem}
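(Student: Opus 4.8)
The plan is to prove the statement by structural induction on the inductive definition of $\mathfrak{S}$-types (the \texttt{ST-Type} definition). That definition has exactly three clauses --- a finite non-empty type, a binary product of $\mathfrak{S}$-types, and a function type $\mathbb{N}\to S$ with $S$ an $\mathfrak{S}$-type --- and the three lemmas just established are designed so that each clause is discharged by exactly one of them. So the theorem is obtained by assembling Lemmas~\ref{lem:fintych}--\ref{lem:seqtych}.

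Concretely: in the base case, a finite non-empty $\mathfrak{S}$-type is continuously searchable directly by Lemma~\ref{lem:fintych}. For an $\mathfrak{S}$-type of the form $S\times S'$, the induction hypothesis gives that both $S$ and $S'$ are continuously searchable, so Lemma~\ref{lem:prodtych} yields that $S\times S'$ is continuously searchable. For an $\mathfrak{S}$-type of the form $\mathbb{N}\to S$, the induction hypothesis gives that $S$ is continuously searchable, so Lemma~\ref{lem:seqtych} yields that $\mathbb{N}\to S$ is continuously searchable. Since these cases exhaust the constructors in the definition of $\mathfrak{S}$-types, the induction is complete. Formally, this is precisely the induction principle generated by the \texttt{ST-Type} datatype, with the three motive cases filled in by the corresponding lemma.

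There is no substantive obstacle at the level of this theorem: all of the genuine work has been absorbed into the three lemmas, and in particular into the Tychonoff-style Lemma~\ref{lem:seqtych}, whose delicate part is verifying continuity of the constructed searcher by induction on the modulus of continuity of the predicates involved. The only thing to check here is bookkeeping --- that the hypotheses of each lemma match exactly what the induction hypothesis supplies (namely, that the immediate sub-$\mathfrak{S}$-types are themselves continuously searchable) --- which makes the assembly immediate. In \textsc{Agda} this is a short recursion on \texttt{ST-Type} invoking \texttt{finite-ST-searchable}, \texttt{product-ST-searchable}, and \texttt{tychonoff} in the respective branches.
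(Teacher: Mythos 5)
Your proof is correct and matches the paper exactly: the paper also obtains the theorem immediately from Lemmas~\ref{lem:fintych}--\ref{lem:seqtych} by structural induction on the definition of $\mathfrak{S}$-types, with one lemma per constructor. Nothing further is needed.
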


This theorem is a Tychonoff-style theorem since \stypet-types are closely related to compact types, and the definition of \stypet-types can be interpreted as all types that can be built from finite types using products, finite or countable. The theorem guarantees that the collection of types that can be used in regression is rich enough to cover many interesting examples.

\section{Generalised parametric regression}

In this preamble to our main technical results we give a semi-formal presentation of the key ideas to aid understanding and explain the method we are following.

Consider the most common form of regression, linear regression. It involves a `model' $M_{\vec k}: \mathbb R\to \mathbb R$ defined as $M_{\vec k}(x) = k_1 \cdot x + k_0$, with $k_0, k_1\in\mathbb R$. The regression task involves computing the parameters $\vec k = (k_0,k_1)\in \mathbb R^2$ such that a measure of loss, or error, between $m_{\vec k}$ and a data set $\Omega=\{(x_i, y_i)\in\mathbb R\mid 0\leq i < n\}$ is minimised. A common, but not unique, formula for such a loss function is the `least squares', defined as
\[
\Phi = \sum_{0\leq i< n}\bigl(y_i-M_{\vec k}(x_i)\bigr)^2
\]
This is essentially an optimisation problem: finding the best $\vec k\in\mathbb R^2$ to minimise the function above.

Note that the regression problem has an identical formulation for polynomial regression, where the model is a polynomial of some fixed rank $M_{\vec k}(x)=\sum_{0\leq i\leq n}k_i\cdot x^i$, except that the problem now is finding some $\vec k\in \mathbb R^n$. We work towards generalising the concepts, offering the following informal definitions first:
\begin{definition}\label{def:genreg}
We say that an \textit{oracle} is a continuous function of type $\Omega:X_1\to X_2$ .

We say that a \textit{parameterised model} is a continuous function of type $M:X_0\to (X_1\to X_2)$ .

We define a \textit{loss function} as any continuous function of type $\Phi: (X_1\to X_2)\to (X_1\to X_2)\to [0,1]$ such that $\Phi(f,f)=0$, for any $f:(X_1\to X_2)$.
\end{definition}

These definitions are still informal in the sense that we are not saying anything yet about what the $X_i$s are.
The obvious candidates for such types are computable representations of (compact subsets of) real numbers.
However, as we shall see, any \stypet-types can be used, which leads to a generalisation of existing notions of regression.

Note that a loss function is a generalisation of a metric, dropping the requirement for it be sub-additive and even symmetric. It is convenient, without loss of generality, to normalise it to the unit interval, which will be represented as a specific \stypet-type.

For readability we may write the instantiation of a model for a given parameter as $M_k=M\,k$ and the loss function in curried form, so that the quantity to minimise is written as $\Phi(M_k, \Omega)$.

Our perspective on regression, succinctly expressed, is the following:
\\[1.5ex]
\fbox{
  \begin{minipage}{.95\linewidth}\em
    The \textbf{\emph{regression problem}} consists of finding a
    parameter $k:X_0$ such that for a given {oracle} $\Omega :
    X_1\to X_2$ and \textit{model} $M : X_0 \to (X_1 \to X_2)$, the value of the loss function $\Phi(M_k,\Omega)$ is
    minimised.
  \end{minipage}
}
\\[1.5ex]
For instance, in the case of linear regression we may (naively) take $X_0=\mathbb R^2$ and $X_1=X_2=\mathbb R$; for polynomial regression $X_0=\mathbb R^n$ for some fixed $n$ and the type of the oracle as before. The loss function, least squares (or rather a normalised version thereof), is in $(\mathbb R\to \mathbb R)\to(\mathbb R\to \mathbb R)\to[0,1]$.

However, the reals $\mathbb R$  cannot be represented as an \stypet-type. In the sequel we see how to work with computable representations of certain (compact) subsets of $\mathbb R$ which are \stypet-types and lead to interesting examples, according to our motivation discussed earlier.

\subsection{Real numbers and their representations}
\label{sec:orderedrn}

\begin{remark}\label{rem:real}
  Before we proceed we need to make some important distinctions.
  The real numbers $\mathbb R$ are a well understood mathematical concept.
  In our formal perspective we are required to work with a representation, or an encoding, of the real numbers into entities that can be defined type-theoretically.
  This leads to a foundational tension between the mathematical concepts and their formal representations.
  The most significant potential problem arises from the fact that mathematical functions operate on real numbers, whereas our functions work on \emph{encodings} of real numbers (codes).
  If a function defined in our representational domain corresponds to a genuine mathematical function it is called its \emph{realiser}.
  However, we can define functions on codes which are more `intensional' in nature than mathematical functions because they have access to the internal representation of the numbers in a way that mathematical functions do not.
  Such functions are not realisers of any genuine mathematical function.
  Yet, such functions are interesting from the point of view of computer science, data science, or machine learning insofar as we see these disciplines are intrinsically algorithmic rather than purely mathematical, thus restricted to operating on codes.
  Thus, resolving this foundational tension by ensuring that all `representational' functions are genuine realisers is not something that we are concerned with in this paper, although it is an important and well-studied topic in computable real number arithmetic~\cite{DBLP:conf/mfcs/Simpson98}.
\end{remark}

As motivated by the considerations above and our leading target examples we need to consider now real numbers. In our constructive setting we clearly need to restrict ourselves to representations of some `computable' reals. More precisely, we require representations of the reals for which our desired operations (at least comparison, addition and multiplication) can be defined and are continuous.

\newcommand{\uniti}{\mathbb U}

Real numbers are used in two ways: in the general setting, as part of defining the concept of `loss function', and in examples. Because of this distinction we can conveniently use several types which serve different purposes. For the loss function we can represented the unit interval $[0,1]$ as binary sequences $\uniti=\mathbb N\to \{0,1\}$, which is clearly an \stypet-type. For this type we can define families of strict and total order relations, each of which is detachable and continuous. Each element $r:\uniti$ is an encoding of a real number in $[0,1]$; we notate the encoding of $0$ as $0_\uniti$. The interpretation is the standard one for binary numbers: $\sum_{i\in\mathbb N}r(i)\times 2^{-(i+1)}$.

\begin{definition}[\texttt{UIOrder.$<^U$}]
For any $p:\mathbb N$, a sequence $a:\mathbb U$ is said to be \textit{less-than with precision} $p$ another sequence $b:\mathbb U$, written $a<_n b$, if there is some $k:\mathbb N$, $k < p$ such that their prefixes up to $k$ are equal and $a_k < b_k$.
\end{definition}

\begin{definition}[\texttt{UIOrder.$\leq^U$}]\label{def:leq}
For any $p:\mathbb N$, a sequence $a:\mathbb U$ is said to be \textit{less-than-or-equal-to with precision} $p$ another sequence $b:\mathbb U$, written $a\leqslant_p b$, if either $a <_p b$ or $a \equiv_p b$.
\end{definition}

It is straightforward to prove that, for any $p:\mathbb N$, $<_p$ is a strict partial order, $\leqslant_p$ is a total order and, given $a,b:\mathbb U$, these predicates are decidable and continuous.


With these considerations in place we can revisit and spell out the informal parts of Def.~\ref{def:genreg}, the general formulation of regression. To cast it in type theory, we will always take $X_i$ to be \stypet-types, and we will use the representation of the unit interval $\uniti$ as the domain of the loss function. The type of the oracle is thus some $X_1\to X_2$, which is an \otype-type.

\subsection{Continuity of the loss function}

The type of the loss function is $Y\to Y\to \uniti$, with $Y$ being \otype-types. This means that the standard definition of function continuity (Def.~\ref{def:funccont}) does not apply.
In this section we define a notion of `continuity' for loss functions.

First we introduce a notion of approximate equality for functions.

\begin{definition}[\texttt{TheoremsBase.ST-$\approx^f$}]
Two functions $f,g:S\to T$ with $S,T$ being $\mathfrak S$-types are said to be \textit{equal  with precision} $p$ in the exactness type of $T$, written $f\approx_p g$ if for all $x:S$ we have that $f(x)\equiv_p g(x)$.
\end{definition}
This is an extensional definition in which all points in the domain are evaluated, but the results are compared only with precision $m$, which needs to be of the exactness type of $T$.

With this, we can define a weaker notion of continuity for model functions. 
\begin{definition}[\texttt{TheoremsBase.continuous$^M$}]
A model function $M:S\to Y$ where  $S$ is an $\mathfrak S$-type and $Y=S'\to S''$ and \otype-type is said to be \textit{weakly continuous} if for all precisions $p$ in the exactness type of $S''$ there exists a precision  $q$ in the exactness type of $S$ such that for all $k, k':S$, if $k\equiv_q k'$ then $M_k\approx_p M_{k'}$.
\end{definition}
Note that $n$ above has the exactness type for $S''$ and $m$ for $S$, respectively.

It is straightforward to show that
\begin{lemma}[\texttt{TheoremsBase.strong$\to$weak-continuity}]
Any (model) function that is continuous is also weakly continuous.
\end{lemma}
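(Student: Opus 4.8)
The plan is to unfold both definitions and show that the witness provided by strong continuity works, after the appropriate adjustment of quantifier structure. Recall the statement: we must show that any continuous $M : S \to Y$, with $Y = S' \to S''$ an \otype-type, is weakly continuous. Strong continuity of $M$ (Def.~\ref{def:funccont}, applied with the oracle type $Y$ playing the role of $S'$ in that definition) gives us, for every precision in the exactness type of $Y$, a modulus in the exactness type of $S$. Weak continuity instead asks, for every precision $p$ in the exactness type of $S''$, for a modulus $q$ in the exactness type of $S$ with $k \equiv_q k' \Rightarrow M_k \approx_p M_{k'}$, i.e.\ $\forall x:S'.\, M_k(x) \equiv_p M_{k'}(x)$.

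The key observation is how the two notions of equality on $Y = S'\to S''$ relate. Strong continuity speaks of $M_k \equiv_r M_{k'}$ for some precision $r$ in the exactness type of the \stypet-type that would be $Y$ — but $Y$ is not itself an \stypet-type, so strictly what is available here is the continuity notion instantiated so that equality on $Y$ is read through the function-approximation relation $\approx$. Concretely, I expect the intended reading is that "continuous" for a model function means: for all $p$ in the exactness type of $S''$ there is $q$ in the exactness type of $S$ such that $k \equiv_q k'$ implies $M_k(x) \equiv_p M_{k'}(x)$ for all $x$ — which is \emph{literally} weak continuity once one inlines the definition of $\approx_p$. If instead the paper's "continuous" here genuinely refers to Def.~\ref{def:funccont} with $S \to Y$ regarded via some \stypet-type structure, then the proof is: take the precision $p$ in the exactness type of $S''$, lift it to a precision on $Y$ (the exactness type of $S' \to S''$ being inherited pointwise from that of $S''$, so $p$ itself or $p$ paired with trivial data serves), feed this to strong continuity to obtain $q$, and then unfold $M_k \equiv_{(\cdot)} M_{k'}$ on function values back to the pointwise statement $M_k(x)\equiv_p M_{k'}(x)$.

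So the steps are: (1) fix an arbitrary precision $p$ in the exactness type of $S''$; (2) produce from $p$ the corresponding precision on the codomain $Y$ needed by strong continuity — this is the step requiring care, as it hinges on exactly how equality-with-precision on oracle types is set up; (3) invoke strong continuity of $M$ to get a modulus $q$ in the exactness type of $S$; (4) take this same $q$ as the witness for weak continuity; (5) given $k \equiv_q k'$, apply strong continuity to conclude the functions agree to the chosen precision, then unfold that conclusion through the definitions of $\approx_p$ (Def.~\texttt{TheoremsBase.ST-$\approx^f$}) and equality-with-precision on $S''$ to obtain exactly $M_k \approx_p M_{k'}$.

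The main obstacle is conceptual rather than computational: pinning down the relationship between the (strong) continuity notion as applied to a map into an \otype-type and the pointwise relation $\approx_p$, i.e.\ confirming that "strong continuity of $M$" is being used here with equality on $Y$ already interpreted pointwise. Once that matching is made explicit, the argument is essentially a rewriting of definitions with no quantifier juggling beyond reusing the strong modulus verbatim, which is why the paper rightly calls it straightforward. I would double-check, in particular, that no loss of precision is incurred in step (2) — that the precision $p$ on $S''$ maps to a precision on $Y$ whose associated equality, when restricted back to values, is still precision-$p$ equality on $S''$ and not something coarser — since any mismatch there would force a more delicate bookkeeping of moduli.
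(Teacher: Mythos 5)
There is a genuine gap: you never pin down what ``continuous'' means for a model function $M : S \to Y$, and both readings you hedge between fail. Your first reading (that strong continuity already asserts $M_k \approx_p M_{k'}$ pointwise) makes the lemma a tautology, which cannot be intended since the paper's theorems genuinely distinguish ``continuous'' models (Thm.~\ref{thm:congstyle}) from merely ``weakly continuous'' ones (Thm.~\ref{thm:min}, Thm.~\ref{thm:imperfect}). Your second reading needs a precision on the codomain $Y = S' \to S''$ ``inherited pointwise'', but the framework supplies no such thing: exactness types are defined only for \stypet-types, and the paper explicitly states that \otype-types are not \stypet-types. So step (2) of your plan has nothing well-typed to feed into Def.~\ref{def:funccont}, and the ``delicate bookkeeping'' you defer is exactly where the argument would break.

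The reading that makes the lemma both well-typed and non-trivial is \emph{joint} (uncurried) continuity: view $M$ as a map $S \times S' \to S''$ between \stypet-types, so Def.~\ref{def:funccont} applies and, for a precision $p$ in the exactness type of $S''$, yields a modulus $(q, q')$ in the exactness type of $S \times S'$. The proof is then: take $q$ as the weak modulus; given $k \equiv_q k'$ and any $x : S'$, reflexivity of $\equiv_{q'}$ (Prop.~\ref{prop:equivrel}) gives $(k,x) \equiv_{(q,q')} (k',x)$, hence $M_k(x) \equiv_p M_{k'}(x)$ for all $x$, i.e.\ $M_k \approx_p M_{k'}$. The missing idea is precisely this instantiation $x' := x$ via reflexivity, which discards the continuity-in-$x$ half of the strong hypothesis --- weak continuity asserts nothing about how $M_k$ varies in $x$, which is why it is strictly weaker. (The paper itself offers no proof beyond calling the lemma straightforward, so this comparison is against the argument its definitions force rather than against a written proof.)
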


With this, we can define (weak) continuity for the loss function.

\begin{definition}[\texttt{TheoremsBase.continuous$^L$}]
  A loss function $\Phi:Y\to Y\to \uniti$,  where $Y=S\to S'$ is an $\mathfrak O$-type, is said to be \textit{(weakly) continuous} if for any precision $p$ in the exactness type of $\uniti$, there exists a precision $q$ in the exactness type of $S'$ such that if for all $g,h:Y$ if $g\approx_q h$ then for all $f:Y$ we have that $\Phi(f,g)\equiv_p\Phi(f,h)$.

  We call $q$ the MoC of $\Phi$ for precision $p$. 
\end{definition}

The definition above can be generalised so that it is continuous in both arguments. However, only this more restricted continuity of the loss function is required by the theorems below.

\subsection{Global optimisation and the convergence of regression}

We now turn our attention to a general characterisation of algorithms for regression: in what circumstances they exist and what it means for them to be correct. The standard property of regression is that a `best guess" parameter can always be produced.
\begin{theorem}\label{thm:min}
Let  $S$ be an $\mathfrak S$-type and $Y$ be an $\mathfrak O$-type and $p$ a precision in the exactness type of $\uniti$.
For any weakly continuous model $M:S\to Y$, oracle $\Omega: Y$, and continuous loss function $\Phi:Y\to Y\to \uniti$ we can construct  a parameter $k_0:S$ such that for any $k:S$ we have that $\Phi(\Omega, M_{k_0})\leqslant_p\Phi(\Omega, M_k)$.
\end{theorem}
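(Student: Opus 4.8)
The plan is to reduce Theorem~\ref{thm:min} to the continuous searchability of $S$ (which holds by the main theorem of Section~2, since $S$ is an \stypet-type) applied to a carefully constructed detachable and continuous predicate on $S$. Given the target precision $p$ in the exactness type of $\uniti$, I would first build the predicate
\[
Q(k) \;\equiv\; \text{``for all } k':S,\ \Phi(\Omega, M_k) \leqslant_p \Phi(\Omega, M_{k'})\text{''}.
\]
The conclusion of the theorem is exactly that some $k_0$ satisfies $Q(k_0)$, so once $Q$ is shown to be detachable and continuous, and once $Q$ is shown to be \emph{satisfiable} (i.e.\ a witness genuinely exists), the searcher $\mathscr E$ for $S$ returns the required $k_0$.

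The key steps, in order, are: (1) show $Q$ is detachable --- this reduces to the decidability of $\leqslant_p$ on $\uniti$ (stated after Def.~\ref{def:leq}) together with the fact that the universal quantifier over $k'$ can be eliminated, since $\leqslant_p$ is a total order so $Q(k)$ is equivalent to $\Phi(\Omega, M_k) \leqslant_p \Phi(\Omega, M_{\hat k})$ for the minimiser $\hat k$; (2) show $Q$ is continuous --- here I would compose the weak continuity of $M$ (feeding a modulus $q_0$ in the exactness type of the codomain of $Y$), the continuity of $\Phi$ in its second argument (which converts $M_k \approx_{q_0} M_{k'}$ into $\Phi(\Omega, M_k) \equiv_p \Phi(\Omega, M_{k'})$), and the fact that $\equiv_p$ on $\uniti$ refines $\leqslant_p$, using Prop.~\ref{thm:composition} for the composition; (3) establish that a minimiser exists, so that $Q$ is not vacuous --- since $S$ is continuously searchable one can exhibit $\hat k$ as the witness returned by searching the predicate $\lambda k.\ \Phi(\Omega, M_k) \leqslant_p \Phi(\Omega, M_{k'})$ iterated appropriately, or more directly argue that searchability of $S$ at precision $p$ yields an element that is $\leqslant_p$-minimal for the (continuous, detachable) map $k \mapsto \Phi(\Omega, M_k)$; (4) apply the searcher $\mathscr E$ of $S$ to $Q$ to extract $k_0$, and read off $\Phi(\Omega, M_{k_0}) \leqslant_p \Phi(\Omega, M_k)$ for all $k$ from the search condition.

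The main obstacle I expect is step (2) combined with step (3): making the continuity argument for $Q$ go through cleanly requires that the modulus extracted from $\Phi$ (in the exactness type of $S'$, the codomain component) matches what weak continuity of $M$ can supply, and one must be careful that the predicate ``$\Phi(\Omega, M_k) \leqslant_p \Phi(\Omega, M_{k'})$'' is continuous \emph{uniformly} in $k'$ --- i.e.\ that a single modulus $q$ works for the whole universally quantified statement. This is where the precise bookkeeping of exactness types pays off: because $M$ is weakly continuous, $k \equiv_q k'$ forces $M_k$ and $M_{k'}$ to agree on \emph{all} inputs up to the relevant precision, so $\Phi(\Omega, M_k)$ and $\Phi(\Omega, M_{k'})$ agree up to precision $p$, and since $\leqslant_p$ only inspects the length-$p$ prefix, the truth value of $Q$ is preserved. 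The remaining verifications --- that $<_p$ and $\leqslant_p$ behave as claimed and that the searcher's output satisfies the search condition --- are routine given the lemmas already in hand.
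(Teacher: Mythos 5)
Your overall route is genuinely different from the paper's: you try to obtain $k_0$ by applying the searcher for $S$ to a single predicate $Q(k)$ asserting minimality, whereas the paper proves a standalone minimisation lemma --- every continuous $f : S \to \mathbb{U}$ has an $\mathrm{argmin}$ with precision $p$ --- by structural induction on the $\mathfrak S$-type $S$ (exhaustive comparison for finite types, a nested $\mathrm{argmin}$ for products, induction on the first component of the modulus of continuity for sequence types), and does not invoke the searcher in this theorem at all. The difference matters because of what a searcher actually guarantees: it returns a witness of a detachable continuous predicate \emph{only if one exists}, and otherwise returns an arbitrary element. So applying $\mathscr E$ to $Q$ proves the theorem only once you have independently established that $Q$ is satisfiable, i.e.\ that a $\leqslant_p$-minimiser exists. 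That existence claim is the entire content of the theorem, and your step (3) does not supply it: you justify the detachability of $Q$ by comparing against ``the minimiser $\hat k$'', and you justify the existence of $\hat k$ by asserting that ``searchability of $S$ at precision $p$ yields a $\leqslant_p$-minimal element''. The first move is circular (you use the object you are constructing to define the predicate you would search with), and the second is precisely the statement to be proved; searchability as defined in the paper gives no direct argument for it.

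The gap is repairable without abandoning your strategy, but it needs an idea you have not written down. Since $\leqslant_p$ inspects only the length-$p$ prefix of an element of $\mathbb{U}$, the composite $k \mapsto \Phi(\Omega, M_k)$ descends to a continuous map into the finite set of $p$-bit prefixes; one can enumerate those $2^p$ candidate values in increasing order and, for each value $v$, use the searcher on the detachable continuous predicate ``the $p$-prefix of $\Phi(\Omega,M_k)$ equals $v$'' to decide whether $v$ is attained (the standard ``searchable implies exhaustible'' argument). The least attained value then yields a genuine witness $k_0$, and only at that point does the search condition give $\Phi(\Omega,M_{k_0}) \leqslant_p \Phi(\Omega,M_k)$ for all $k$. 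Your step (2) --- uniform continuity of the comparison via weak continuity of $M$ and continuity of $\Phi$ --- is sound and matches the reasoning the paper uses to see that $\lambda k.\Phi(\Omega,M_k)$ is continuous; the missing piece is solely the non-vacuousness of the minimality predicate.
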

\begin{proof} We prove this as a corollary to the more general theorem that \textit{any} continuous function $f : S \to \uniti$ has a minimum argument $k_0 : S$ such that $\forall k:S.f k_0 \leqslant_p f k$. The corollary follows because -- due to the continuity conditions on $M$ and $\Phi$ -- the function $\lambda x.\Phi(\Omega,M_x)$ is continuous.

We use induction on the structure of $S$ as an $\mathfrak S$-type. In each case we wish to construct the \textit{argmin} for $f$ with precision $p$, notated $argmin_S(f,p) : S$.

In the finite case, we proceed by induction on the number of constructions of $S$. If $S = \mathds{1}$, the unit type with the single construction $\{\star\}$, then clearly $argmin_{\mathds{1}}(f,p) = \{\star\}$. If $S = S' + \mathds{1}$ for some \stypet-type $S'$, then we proceed by inductively computing $x_0' = argmin_{S'}(\lambda x.f(inl \ x),p)$, where $inl : S' \to S$ casts the element $x:S'$ to the corresponding element in $S$. As $x_0'$ is the argmin for $f$ in $S'$ with precision $p$, and $\{\star\}$ is the corresponding argmin in $\mathds{1}$, we simply need to decide whether $f (inl \ x_0') \leqslant_p f (inr \ *)$ or $f (inl \ x_0') \leqslant_p f (inr \ *)$ -- where $inr : \mathds{1} \to S$. This is decidable because $\leqslant_p$ is decidable and a total order by Def~\ref{def:leq}.

In the product case $S = S' \times S''$, we proceed similarly to the structure of Thm.~\ref{lem:prodtych}. We construct $(x_0,y_0)=argmin_S(f,p)$ as follows:
\begin{align*}
\hat y(x) &= argmin_{S''}(\lambda y.f(x,y),p) \\
x_0 &= argmin_{S'}(\lambda x.f(x, \hat y(x),p)\\
y_0 &= \hat y(x_0).
\end{align*}
From these inductive constructions, we have that $\forall x.f(x_0,\hat y(x_0)) \leqslant_p f(x,\hat y(x))$ and $\forall x,y. f(x,\hat y(x)) \leqslant f(x,y)$. By transitivity of $\leqslant_p$ (Def~\ref{def:leq}), therefore, $\forall x, y. f(x_0,y_0) \leqslant f(x,y_0) \leqslant f(x,y)$.

In the sequence case $S = \mathbb N \to S'$, we proceed similarly to the above and by the structure of Lemma \ref{lem:seqtych}, i.e. by induction on the first projection $n : \mathbb N$ of the MoC of $f$ at point $p$. When $n = 0$ the case is vacuous. In the inductive step, we construct $\alpha_0 = argmin_S(f)$ as follows:
\begin{align*}
\hat x(\alpha) = argmin_S' (\lambda x.f(x :: \alpha),p) \\
\alpha_t = argmin_S(\lambda \alpha.f(\hat x(\alpha) :: \alpha),p) \\
x_0 = \hat x(\alpha_t) \\
\alpha_0 = x_0 :: \alpha_t
\end{align*}
$\alpha_t$ is constructed by the inductive hypothesis on the MoC, because the MoC of $\lambda \alpha.f(x :: \alpha)$ at point $(p,\star)$ for a given value $x:S'$ will be one lower than that of $f$.
Therefore, we have that $\forall \alpha.f(\hat x(\alpha_t) :: \alpha_t) \leqslant f(\hat x(\alpha) :: \alpha)$ and $\forall x,\alpha.f(\hat x(\alpha) :: \alpha) \leqslant f(x :: \alpha)$; again, the result is obtained via the transitivity of $\leqslant_p$. An additional lemma is used to finish this case that shows the output of a continuous function $f(\alpha)$ is equal to the required precision to $f(head(\alpha) :: tail( \alpha))$, where $head(\alpha) = \alpha \ 0$ and $tail(\alpha)=\lambda n.(\alpha (n+1))$. Thus, $\forall \alpha. \alpha.f(x_0 :: \alpha_t) \leqslant_p f(head \ \alpha :: tail \ \alpha) \equiv_p f(\alpha)$.

\end{proof}

This theorem seems to give a definitive constructive, type-theoretic, characterisation of regression.
However, the computational content of the proof is, on closer inspection, not satisfactory.
We can understand that more easily by instantiating the theorem on particular types, such as $Y=\mathbb U\to\mathbb U$.
Informally speaking, the proof requires finding the argmin $k_0$ of the function $f(k)=\Phi(\Omega, M_{k})$ with some fixed precision.
The way in which $k_0$ is computed is by partitioning the interval $\mathbb U$ into a finite number of intervals computed from the precision.
The continuity condition of $f$ will allow us to compute a size of these intervals which is small enough so that their images through $f$ is smaller than the precision.
In other words, for the given precision $p$ we do not need more than a certain precision of the input.
And, since there is a finite number of partitions, we can simply examine the value of $f$ for all of them and select the one in which this value is minimal.

There are two inter-related problems here.
The first one is obvious, the algorithm that is extracted out of the proof is an always-exhaustive search of the domain, up to the desired level of precision.
The second one is more subtle and it has to do with the `stability' of the algorithm.
Suppose that there are two distinct values $k_0$ and $k_0'$ for which $f(k_0)=f(k_0')$ and which is minimal for $f$.
In this situation, as we run the argument with different precisions $p$ sometimes we may get an approximation around $k_0$ as a result and sometimes we may get one around $k_0'$.
As $p$ gets smaller the algorithm is not guaranteed to converge on either of them.

The misbehaviour is not entirely surprising considering that we are attempting to compute a function, argmin, which is known not to be computable~\cite{Troelstra}.
The reason we manage to compute anything at all is because our algorithm has access to the codes of the numbers involved, so it is a function which is not a realiser of any mathematical function (also see Remark~\ref{rem:real}).

\subsection{Regressing a perfect model}

The regression Thm.~\ref{thm:min} gives a conventional characterisation of regression, but it has certain shortcomings, as discussed.
It also does not tell the whole story. Whereas it states the situation in which the loss value can be \emph{minimised} it makes no absolute statement regarding the loss itself. We therefore desire a statement which says something about the situation in which the error can not only be minimised, but also be made vanishingly small.
In other words, a \emph{convergence theorem} guaranteeing that the regressed model is arbitrarily close, as measured by the loss function, to the oracle.

In parametric regression we are epistemologically committed to a model, we just do not know its parameters and we want to calculate them from observations. The minimisation algorithm in Thm~\ref{thm:min} is always guaranteed to produce a `best guess' in terms of minimising loss, but if our bet on a particular model is the correct one then this `best guess' should be such that the the loss can be made vanishingly small. To represent this situation, instead of taking an arbitrary oracle $\Omega$ we take an arbitrary parameter $k_0$ and create a \emph{synthetic oracle} $\Omega=M_{k_0}$. The synthetic oracle has the `same shape' as the model, therefore can be approximated with arbitrarily small loss.

For this theorem we will rely on the concept of searchability, which did not come into play in the minimisation theorem Thm~\ref{thm:min}. We will call a \emph{regerssion algorithm} a \emph{regressor}.

\begin{theorem}[\texttt{LossTheorems.perfect-theorem}]\label{thm:perfect}
  Let  $S$ be an $\mathfrak S$-type, $Y$ an \otype-type,  $p:\mathbb N$ a precision,
  $\epsilon:\uniti$ a loss value such that $0_\uniti <_p\epsilon$, and  $\Phi:Y\to Y\to\uniti$ a continuous loss function.
  
  There exists a \emph{regressor} $\mathrm{reg}: (S\to Y)\to Y\to S$ such that given an element  $k_0:S$, and weakly continuous model $M:S\to Y$, we can construct $k=\mathrm{reg}\,M\,\Omega$ such that 
  $\Phi(\Omega, M_k) <_p \epsilon$, for synthetic oracle $\Omega=M(k_0)$.
\end{theorem}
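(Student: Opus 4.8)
\emph{Proof proposal.} The plan is to realise the regressor as a single application of the searcher provided by the Tychonoff-style theorem (all $\mathfrak S$-types are continuously searchable), taking as the search predicate precisely the target inequality, and then to observe that the parameter $k_0$ defining the synthetic oracle is itself a witness for that predicate.

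Concretely, I would set
\[
\mathrm{reg}\,M\,\Omega \;=\; \mathscr E_S\bigl(\lambda k.\ \Phi(\Omega, M_k) <_p \epsilon\bigr),
\]
where $\mathscr E_S$ is the continuous searcher on $S$. For this to be well formed I must check that the predicate $Q_{M,\Omega}(k) := \bigl(\Phi(\Omega,M_k) <_p \epsilon\bigr)$ is detachable and continuous. Detachability is immediate since $<_p$ is decidable. For continuity I would chain three facts. First, the predicate $(\cdot)<_p\epsilon$ on $\mathbb U$ has modulus of continuity $(p,\star)$: from the definition of $<_p$, if $a\equiv_p a'$ and $a<_p\epsilon$ then, agreeing with $a$ on the first $p$ digits, $a'$ also satisfies $a'<_p\epsilon$. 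Second, continuity of $\Phi$ at precision $p$ yields a modulus $q_\Phi$ in the exactness type of the codomain of $Y$ with $M_k\approx_{q_\Phi}M_{k'}\Rightarrow\Phi(\Omega,M_k)\equiv_p\Phi(\Omega,M_{k'})$ (instantiating the $\forall f$ clause at $f=\Omega$). Third, weak continuity of $M$ at precision $q_\Phi$ yields a modulus $q_M$ in the exactness type of $S$ with $k\equiv_{q_M}k'\Rightarrow M_k\approx_{q_\Phi}M_{k'}$. Composing these three implications shows $q_M$ is a modulus of continuity for $Q_{M,\Omega}$, so the searcher applies. (Note that Prop.~\ref{thm:composition} is not directly usable here because $Y$ is an $\mathfrak O$-type rather than an $\mathfrak S$-type, so this composition must be carried out by hand.)

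Next I would verify that $Q_{M,\Omega}$ is satisfiable when $\Omega=M(k_0)$: indeed $Q_{M,\Omega}(k_0)=\bigl(\Phi(M(k_0),M(k_0))<_p\epsilon\bigr)=\bigl(0_{\mathbb U}<_p\epsilon\bigr)$, which holds by hypothesis since $\Phi(f,f)=0_{\mathbb U}$. Therefore, by the defining search condition of $\mathscr E_S$, the returned element $k=\mathrm{reg}\,M\,\Omega=\mathscr E_S(Q_{M,\Omega})$ must also satisfy the predicate, i.e.\ $\Phi(\Omega,M_k)<_p\epsilon$, as required.

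The argument is essentially routine once the setup is in place; the only mildly delicate point is the modulus threading for continuity of the search predicate — keeping track of which exactness type each precision inhabits and using the definition of continuous predicate in the correct direction. The actual \emph{content} of the theorem is the simple but crucial observation that perfection of the model, namely $\Omega=M(k_0)$ together with $\Phi(f,f)=0_{\mathbb U}$, makes the desired inequality satisfiable, so the generic searcher is guaranteed to return a genuine witness rather than an arbitrary fallback element.
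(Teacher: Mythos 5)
Your proof is correct, but it takes a different route from the paper's. The paper proves Thm.~\ref{thm:perfect} in one line, as an immediate corollary of the imperfect-model theorem (Thm.~\ref{thm:imperfect}) instantiated with the identity distortion $\Psi=\mathrm{id}$: then $\Phi(\Psi_\Omega,\Omega)=\Phi(\Omega,\Omega)=0_{\mathbb U}<_p\epsilon$, so the precondition of Thm.~\ref{thm:imperfect} holds trivially and its conclusion is exactly the claim. What you have done is, in effect, inline that corollary: your regressor $\mathscr E_S(\lambda k.\,\Phi(\Omega,M_k)<_p\epsilon)$ is precisely the regressor of Thm.~\ref{thm:imperfect} specialised to $\Psi_\Omega=\Omega$, and your witness argument ($k_0$ satisfies the predicate because $\Phi(f,f)=0_{\mathbb U}$ and $0_{\mathbb U}<_p\epsilon$) is the paper's identity-distortion observation in direct form. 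The trade-off is the usual one: the paper's factoring gets the perfect case for free once the general theorem is established, while your direct proof is self-contained and, usefully, actually carries out the ``routine calculations'' the paper elides --- the modulus threading $(p,\star)\rightsquigarrow q_\Phi\rightsquigarrow q_M$ showing the search predicate is continuous, including the correct observation that Prop.~\ref{thm:composition} cannot be invoked because $Y$ is an $\mathfrak O$-type, and the check that $(\cdot)<_p\epsilon$ has modulus $(p,\star)$ directly from the definition of $<_p$. Both arguments are sound; yours documents the content that the paper's one-line proof defers to Thm.~\ref{thm:imperfect}.
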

\begin{proof}
This theorem is an immediate corollary of the more general Thm.~\ref{thm:imperfect} in the next sub-section.
\end{proof}

\subsection{Regressing an imperfect model}
Thm.~\ref{thm:min} states that parametric regression eventually converges on the `best possible' solution, whereas Thm.~\ref{thm:perfect} proves that if we `guess' the model correctly then the regression converges on the `absolutely best' solution. But what if we don't guess the right model? Consider the following data which is produced by oracle $\Omega(x) = x + \mathrm{sin}\, x$ in Fig.~\ref{fig:xsinx}.

\begin{figure}
\begin{center}
\includegraphics[scale=.45]{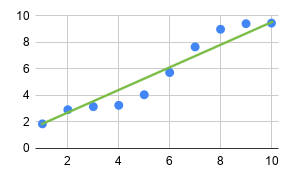}
\end{center}
\caption{Regression to imperfect model}\label{fig:xsinx}
\end{figure}
Parametric regression requires us to commit to a model, and the model can be imperfect. For instance, trying to regress a linear model $M_{\vec k}= k_1\cdot x+k_2$ for the oracle $\Omega$ could give a `pretty good' approximation, depending on the desired precision. We will aim to quantify this using another convergence theorem which essentially says that the better the guessed model the higher the precision of the approximation.

To formulate the theorem we will again use a synthetic oracle $\Omega=M_k$, with unknown parameter $k$, but we will \emph{distort} it using a function $\Psi : Y\to Y$, so that the regression will try to reconstruct $\Psi_\Omega = \Psi(\Omega)= \Psi(M_k)$ by wrongly assuming it is $\Omega$. The distortion function $\Psi$ can represent either measurement noise or a lack of perfect knowledge about the oracle.
To quantify this lack of knowledge, or how powerful the distortion is, we use two approaches.

The first theorem for regressing an unreliable model uses equality with precision $p$ to compare how `equal' the original and the distorted oracle are, and show that the loss function between the correct and the distorted oracle is `just as equal' (with precision $p$) with the loss function between the correct and the regressed oracle. It utilises the following definition of a `continuous' distortion function:

\begin{definition}[\texttt{FunEquivTheorem.continuous$^D$}]
A distortion function $\Psi : Y \to Y$, for a given $\mathfrak O$-type $Y = S \to S'$, is called \textit{continuous} if for any function $f : Y$ and precision $p$ in the exactness type of $S'$, there exists some precision $q$ in the exactness type of $S$ such that for any $x,x':S$ if $x \equiv_q x'$ then $\Psi f x \equiv_p \Psi f x'$.
\end{definition}

\begin{theorem}[\texttt{FunEquivTheorem.imperfect-corollary-with-$\approx$}]
\label{thm:congstyle}
Let  $S$ be an $\mathfrak S$-type, $Y$ an \otype-type, $p$ a precision in the exactness type of $\uniti$, and $\Phi:Y\to Y\to \uniti$ a continuous loss function.
Given an element $k_0:S$, continuous model $M: S\to Y$, and any continuous distortion function $\Psi:Y\to Y$, there exists a regressor $\mathrm{reg}: (S\to Y)\to Y\to S$ such that 
whenever $k = \mathrm{reg}\, M\, \Psi_\Omega$:
\[
\text{If }\Psi_\Omega\approx_q \Omega \text{ then }\Phi(\Omega,\Psi_\Omega)\equiv_p\Phi(\Omega,M_k),
\]
where $\Omega = M(k_0)$ is the synthetic oracle,  $\Psi_\Omega =\Psi(\Omega)$ the distorted synthetic oracle,
and $q$ is the  MoC of the loss function $\Phi$ for precision~$p$.
\end{theorem}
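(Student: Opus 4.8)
The plan is to obtain the regressor from searchability and to reduce the conclusion to the continuity of $\Phi$ together with one elementary observation. Write $Y=A\to B$ with $A,B$ being $\mathfrak S$-types, and let $q$ be the modulus of continuity of $\Phi$ for precision $p$ (a precision in the exactness type of $B$), which exists by the assumed continuity of $\Phi$ --- this is exactly the $q$ named in the statement. For an arbitrary $N:Y$ I would define the predicate $P_N$ on the parameter type $S$ by $P_N(k):=(M_k\approx_q N)$ and set $\mathrm{reg}\,M\,N:=\mathscr E_S(P_N)$, where $\mathscr E_S$ is the continuous searcher for $S$ guaranteed by the fact that every $\mathfrak S$-type is continuously searchable. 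The theorem then follows from three things: (i) that $P_N$ is detachable and continuous, so that $\mathscr E_S$ applies to it; (ii) that, under the hypothesis, $k_0$ is a witness of $P_{\Psi_\Omega}$; and (iii) that continuity of $\Phi$ converts the conclusion $M_k\approx_q\Psi_\Omega$ of the search into $\Phi(\Omega,\Psi_\Omega)\equiv_p\Phi(\Omega,M_k)$.

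For (ii) and (iii), note that by Prop.~\ref{prop:equivrel} the relation $\equiv_q$ is an equivalence relation, hence so is the pointwise relation $\approx_q$; in particular it is symmetric, so the hypothesis $\Psi_\Omega\approx_q\Omega$ gives $\Omega\approx_q\Psi_\Omega$, i.e. $M_{k_0}\approx_q\Psi_\Omega$, i.e. $P_{\Psi_\Omega}(k_0)$ holds. Since a witness exists, the defining property of the searcher gives $P_{\Psi_\Omega}(k)$ for $k=\mathrm{reg}\,M\,\Psi_\Omega=\mathscr E_S(P_{\Psi_\Omega})$, that is $M_k\approx_q\Psi_\Omega$, hence also $\Psi_\Omega\approx_q M_k$. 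Applying the continuity of $\Phi$ for precision $p$ (whose modulus is $q$) with first argument $\Omega$ and with $g:=\Psi_\Omega$, $h:=M_k$ then yields exactly $\Phi(\Omega,\Psi_\Omega)\equiv_p\Phi(\Omega,M_k)$. The continuity half of (i) is also routine: by weak continuity of $M$ (implied by the assumed continuity of $M$) there is a precision $r$ in the exactness type of $S$ with $k\equiv_r k'\Rightarrow M_k\approx_q M_{k'}$, and combined with symmetry and transitivity of $\approx_q$ this makes $r$ a modulus of continuity for $P_N$.

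The hard part will be the detachability half of (i): deciding $M_k\approx_q N$, which unfolds to the universally quantified statement $\forall x:A.\,M_k(x)\equiv_q N(x)$ over the possibly infinite domain $A$. This is where the \emph{full} (not merely weak) continuity of $M$ is genuinely used --- it makes each instance $M_k$ a continuous function $A\to B$ with some modulus $r_A$ for precision $q$ --- and where it matters that the oracle argument $N=\Psi_\Omega$ is itself continuous, which holds because a continuous distortion $\Psi$ applied to any function is continuous. Taking a common modulus for $M_k$ and $\Psi_\Omega$, the quantifier over $A$ collapses to a finite conjunction over representatives of $A$ up to that modulus, and each conjunct $M_k(x)\equiv_q\Psi_\Omega(x)$ is decidable because equality-with-precision on $\mathfrak S$-types is decidable. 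Both the finiteness of an $\mathfrak S$-type up to a fixed precision and the decidability of $\equiv_q$ are obtained by the same structural induction on $\mathfrak S$-types used throughout Section~\ref{sec:tech}; assembling these (and, if one additionally wants $\mathrm{reg}$ to be a continuous function, propagating the continuity of $\mathscr E_S$) is laborious but routine.
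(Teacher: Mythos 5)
Your proposal is correct and follows essentially the same route as the paper's proof: apply the continuous searcher for $S$ to the predicate comparing $M_k$ with the distorted oracle up to the modulus $q$ of $\Phi$, observe that $k_0$ witnesses it under the hypothesis (via symmetry of $\approx_q$ from Prop.~\ref{prop:equivrel}), and conclude by the continuity of the loss function. You additionally spell out the detachability of the predicate (reducing the universal quantifier over the domain via the full continuity of $M$ and $\Psi$), which the paper leaves as a "routine" remark but which is indeed exactly where the stronger continuity hypotheses of this theorem, as opposed to Thm.~\ref{thm:imperfect}, are used.
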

\begin{proof}
$S$ is an $\mathfrak{S}$-type, therefore it comes equipped with a searcher~$\mathscr E$. The regressor which computes the parameter $k$ is 
$
\mathrm{reg}\,M\,\Omega = \mathscr E(\lambda k.\Omega\approx_q M_k)
$. It turns out that, due to the searchability of the \stypet-type and the continuity conditions on the model and distortion functions, this predicate is in fact detachable and continuous.

Because there exists some $k_0:S$ such that $\Psi_\Omega \approx_q M_{k_0}$, we have by the condition on the searcher that $\Omega \approx_q M_k$ where $k = \mathrm{reg}\,M\, \Psi_\Omega$. By transitivity of $\equiv_q$ (Prop.~\ref{prop:equivrel}), we arrive at $\Omega \approx_q M_k$.

Finally, a routine calculation from the continuity of the loss function gives us the result.
\end{proof}

This theorem gives a convergence property of sorts, but it is not very useful in practice. It only applies when the distortion produced by $\Psi$ is small enough for the original and distorted oracle to be `almost equal' (with precision $q$). This means that if the distorted model differs from the true model even rarely, but by a large enough amount, the theorem does not apply. 

For this reason we also give a more practically relevant convergence theorem which uses the loss function itself to measure the degree of distortion, rather than approximate equality, and only requires a weakly continuous model. The second imperfect-model regression theorem states that if the loss between the distorted synthetic oracle and the true oracle is small, then so is the loss between the distorted synthetic oracle and the regressed model. To emphasise, this is even though the model is regressed using the distorted oracle as a source of data.

\begin{theorem}[\texttt{LossTheorems.imperfect-theorem-with-$\Phi$}]\label{thm:imperfect}
  Let  $S$ be an $\mathfrak S$-type, $Y$ an \otype-type, $p:\mathbb N$ a precision, $\epsilon :\uniti$ a loss value, and $\Phi:Y\to Y\to \uniti$ a continuous loss function.

  There exists a regressor $\mathrm {reg}:(S\to Y)\to Y\to S$ such that given an element  $k_0:S$, a weakly continuous model $M: S\to Y$, and a distortion function $\Psi:Y\to Y$,
  for parameter $k=\mathrm{reg}\,M\,\Psi_\Omega$:
  \[
  \text{if }\Phi(\Psi_\Omega, \Omega)<_p\epsilon \text{ then } \Phi(\Psi_\Omega, M_k)<_p\epsilon,
  \]
  for synthetic oracle $\Omega = M(k_0)$ and distorted synthetic oracle $\Psi_\Omega =\Psi(\Omega)$.
\end{theorem}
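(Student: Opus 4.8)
The plan is to follow the template of the proof of Thm.~\ref{thm:congstyle}, but with the search predicate phrased directly in terms of the loss function rather than approximate equality. Since $S$ is an $\mathfrak{S}$-type it is continuously searchable, so it carries a continuous searcher $\mathscr{E}$. I would define the regressor by
\[
\mathrm{reg}\,M\,\Omega' \;=\; \mathscr{E}\bigl(\lambda k.\ \Phi(\Omega', M_k) <_p \epsilon\bigr),
\]
noting that $p$ and $\epsilon$ are fixed parameters of the statement, so this is a legitimate term depending on $M$ and on the data argument $\Omega'$ only, not on $k_0$ or $\Psi$.

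The first thing to verify is that, for the inputs supplied, the predicate $P := \lambda k.\ \Phi(\Omega', M_k) <_p \epsilon$ is detachable and continuous, so that $\mathscr{E}$ may legitimately be applied to it. Detachability is immediate since $<_p$ is decidable (Def.~\ref{def:leq} and the order properties recorded after it). For continuity I would chain the two continuity hypotheses: let $q$ be the MoC of $\Phi$ for precision $p$, and let $q'$ be the modulus witnessing weak continuity of $M$ for precision $q$; then $k \equiv_{q'} k'$ gives $M_k \approx_q M_{k'}$, hence $\Phi(\Omega', M_k) \equiv_p \Phi(\Omega', M_{k'})$, and since $<_p$ inspects only the first $p$ digits (Def.~\ref{def:leq}), $P(k) \Rightarrow P(k')$. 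So $q'$ serves as a modulus of continuity for $P$.

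The crux is then to exhibit a witness for $P$, and this is exactly where the synthetic oracle enters. Taking $\Omega' = \Psi_\Omega$ and recalling $\Omega = M(k_0)$, we have $M_{k_0} = M(k_0) = \Omega$, so the hypothesis $\Phi(\Psi_\Omega, \Omega) <_p \epsilon$ is literally the statement $P(k_0)$. Hence a witness exists, and by the defining condition of the searcher, $\mathscr{E}(P)$ returns some $k$ with $P(k)$ true, i.e. $\Phi(\Psi_\Omega, M_k) <_p \epsilon$, which is the conclusion. Thm.~\ref{thm:perfect} then follows by instantiating $\Psi$ with the identity: $\Psi_\Omega = \Omega$, and the antecedent $\Phi(\Omega,\Omega) <_p \epsilon$ holds because $\Phi(\Omega,\Omega) = 0_\uniti$ by the loss-function axiom and $0_\uniti <_p \epsilon$ by assumption.

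I expect no deep obstacle here: the mathematical content is light, and the only genuinely fiddly point is the continuity bookkeeping in the second step — composing the modulus of the loss function with the modulus of weak continuity of $M$, and checking that the truncated comparison $<_p$ is invariant under $\equiv_p$. That this step is routine is precisely what the weak-continuity apparatus developed above was designed to deliver; the real work of the theorem is the observation, already exploited in Thm.~\ref{thm:perfect}, that $k_0$ is a ready-made witness for the search.
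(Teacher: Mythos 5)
Your proposal is correct and follows essentially the same route as the paper's own proof: the regressor is the searcher applied to the predicate $P(k)=\Phi(\Psi_\Omega,M_k)<_p\epsilon$, with detachability from decidability of $<_p$, continuity by composing the MoC of $\Phi$ with the modulus of weak continuity of $M$, and $k_0$ serving as the witness that makes the searcher's guarantee bite. You have simply spelled out the ``routine calculations'' that the paper leaves implicit, including the (correct) observation that $\equiv_p$-invariance of $<_p$ is what transfers the bound along the continuity chain.
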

\begin{proof}
  The proof follows the same `recipe' as that of Thm.~\ref{thm:congstyle}, effectively constructing a regressor which has the desired property.

  The regressor will use the searcher $\mathscr E$ for the searchable type $S$ for the predicate $P(k)=\Phi(\Psi_\Omega, M_k)<_m\epsilon$ to produce the model parameter $k$. We need to show that this predicate is continuous, detachable, and satisfies the desired property, which follows from routine calculations.
\end{proof}

It is easy to see now that the perfect-model convergence theorem (Thm.~\ref{thm:perfect}) is an immediate consequence of the imperfect-model convergence theorem (Thm.~\ref{thm:imperfect}), by using the identity distortion $\Psi (\Omega) = \Omega$, which then makes $\Phi(\Psi_\Omega, \Omega)=\Phi(\Omega,\Omega)=0_\uniti<_p\epsilon$ so that the condition is trivially true.

We prefer this final formulation of the theorem, in contrast to the previous one, and we will take it as the defining property of regression, rather than the conventional minimisation one expressed in Thm.~\ref{thm:min}. 

Compared to the global minimisation approach, Thm.~\ref{thm:imperfect} has the potential to serve as a basis for more efficient algorithms.
This is because the regressor uses a searcher, which does not need to explore the search space exhaustively, unlike Thm.~\ref{thm:min}.
The searcher can stop and return the parameter as soon as the predicate is satisfied.
In other words, it will provide a `good enough', up to the specified target loss value, solution instead of searching for the `best' solution.
The `worst case' behaviour of exploring the entire space can still happen, especially if there is no witness to the predicate.

We also need to understand that the regressor is guaranteed to return a good enough parameter only when our model is a good enough guess of the oracle.
If our model is bad then the regressed parameter will not be very good either.
This is a problem in practical application, since we may not know what the true model is.
That means we cannot know whether $\Phi(\Psi_\Omega, \Omega)<_p \epsilon$.
Therefore, for the computed parameter $k$, we need to compute separately whether $\Phi(\Psi_\Omega, M_k)<_p\epsilon$.
Fortunately, the latter is computable --- that could be considered a separate `validation of regression' step.
This matches accepted practice in machine learning and data science where `learning' or `inference' is always followed by `validation' or `testing'.
What the Thm.~\ref{thm:imperfect} guarantees is that the regression algorithm is valid, in the sense that good models will always be inferred accurately. 

The imperfect-model regression theorem also saves us from relying too much on  our small methodological innovation as discussed in the Introduction. Regression as broadly practised is `from data' and not `from oracle'. In other words it is `off-line' rather than `on-line', with all data pre-sampled in advance. But we can think of off-line regression as regression to an imperfect model, with the distortion function formed by the composition of a sampling function followed by an interpolation function, noting that interpolation can be easily defined as continuous. Thm.~\ref{thm:imperfect} guarantees that if the reconstruction via sampling and interpolation is `almost perfect' then so is the regressed model. What is left unsaid is that it is indeed possible to reconstruct a function via sampling and interpolation with arbitrary precision. In other words, that the Stone–Weierstrass theorem can be recast in this setting. This is subject of further research.

\section{Examples and applications}

The framework described above is rather abstract. In this section we will show that it is applicable in a common scenario
in which regression is used: polynomial regression with a loss function in the style of least-squares. As a warm-up example we will also show a `degenerate'
form of regression, which is simply searching for the argmin of a function. This example is interesting because it gives a deterministic
version of the well-known random search theorem~\cite{SolisWets81}. Finally, we show and discuss the practical implications of regression to a model described by an \textit{infinite} Taylor series, which is normally outside the scope of existing regression methods.

\subsection{Real number arithmetic}
For examples we focus on the interval $[-1,1]$ which is represented by the type of ternary sequences $\mathbb I=\mathbb N\to \{-1,0,1\}$, a version of the `signed digit representation'~\cite{DBLP:journals/tc/Avizienis61}. Sequences $r:\mathbb I$ are encodings of real numbers in $[-1,1]$, using the standard binary numeral interpretation $\sum_{i\in \mathbb N}r(i)\times 2^{-(i+1)}$.
This representation is particularly well suited for the  definition of  multiplication and normalised addition (taking the midpoint), but is inconvenient for defining an order as the same number can have too many encodings. 
In contrast, $\uniti$ is suitable for ordering but not for arithmetic.
This highlights the convenience of being able to use different representations of the reals for different purposes. 

The midpoint algorithm is closely inspired by Ciaffaglione and Di Gianantonio~\cite{DBLP:journals/tcs/CiaffaglioneG06}, and multiplication by Escard\'o~\cite{Escardo11}. Both of these have been proved formally correct in \emph{loc. cit.} but not in a way that can be easily reused (or recycled) in our setting. 
However, we face an additional burden of proof by being required to show they are all continuous functions in the specific sense of Sec.~\ref{sec:tech}.
This is what we focus on.

Practical applications may require operating with representations of larger sets of reals than just $[-1,1]$.
Arbitrary closed intervals can be obtained from $[-1,1]$ using scaling and shifting by constant values, which introduce some not insurmountable complications.
To deal with larger sets of reals still we need to be always careful that the representation is an \stypet-type.
For instance, a `mantissa and exponent' representation, where the mantissa is a representation of a real and the exponent a natural number is not an \stypet-type.
A good rule of thumb is that compact sets are good candidates which might have such representation.
We leave these issues for further work.

The operations below are a minimum set which will allow us to formulate examples. The implementations are meant to be easy to reason about
rather than efficient -- they are in fact not practically usable. To scale up to realistic regression examples as used for example in machine learning the operations need to be much more efficiently implemented and, perhaps, extracted out of \textsc{Agda} into a more performance-oriented language.
However, there is no reason to believe that the recipe we follow below cannot be applied to more, and more efficiently implemented,
operations.

\paragraph{Midpoint} (Details in module \texttt{IAddition})

\newcommand{\zadd}{+}

Let $x\concat x'$ be a sequence with head $x$ and tail $x'$. Let $\mathbb Z$ be the type of integers and $\zadd$ addition  on integers. We write $2i=i\zadd i$. Following \textit{loc. cit.} we define the midpoint operator $\oplus$ using auxiliary  operations
$\lceil-\rceil:\mathbb Z\to\{-1,0,1\}$, $\lfloor-\rfloor : \mathbb Z\to\mathbb Z$ and $  a : U\to U\to \mathbb Z \to U$.
\begin{align*}
\lceil m \rceil &= \begin{cases}
-1 & \text{if $m\leq -2$}\\
0 & \text{if $-2 < m \leq 1$}\\
1 & \text{if $1< m$}\\
\end{cases}\\
\lfloor m \rfloor &= \begin{cases}
m\zadd 4 & \text{if $m\leq -2$}\\
m & \text{if $-2<m\leq 1$}\\
m\zadd(-4) & \text{if $1< m$}
\end{cases}\\
\boxplus(x \concat x', y \concat y', i)_n &= \begin{cases}
\lceil 2 i \zadd x \zadd y\rceil & \text{if $n=0$}\\
\boxplus(x', y', \lfloor 2 i\zadd x \zadd y \rfloor)_m &\text{if $n=m+1$}.
\end{cases}
\end{align*}
The midpoint operator can now be defined as
\[
(x \concat x')\oplus(y \concat y') = \boxplus(x', y', x\zadd y).
\]

 Full blown addition can be defined using $\oplus$ and a global scaling factor via elementary algebraic manipulations. For example, if $u,v,w\in [-1,1]$ then we can define $u+v+w = (u\oplus v)\oplus (w\oplus 0)$ with a global scaling factor of~4.

\begin{lemma}
The $\oplus$ operator is continuous.
\end{lemma}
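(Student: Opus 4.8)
The plan is to verify continuity in the sense of Def.~\ref{def:funccont}, viewing $\oplus$ as a function on the \stypet-type $\mathbb I\times\mathbb I$. Since the exactness type of $\mathbb I=\mathbb N\to\{-1,0,1\}$ is $\mathbb N\times\mathds 1$, which we may treat as just $\mathbb N$ (the finite component carries no information), the exactness type of $\mathbb I\times\mathbb I$ is essentially $\mathbb N\times\mathbb N$. So we must exhibit, for any target output precision $n$, an input precision — which we will take to be $(n{+}1,n{+}1)$ — such that agreement of two input pairs to that precision forces agreement of the outputs to precision $n$. Here $\alpha\equiv_n\beta$ for $\alpha,\beta:\mathbb I$ means, by the definition of equality with precision, that the length-$n$ prefixes of $\alpha$ and $\beta$ are literally equal (the unit component being trivial).

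The heart of the argument is an auxiliary lemma about $\boxplus$, proved by induction on $n$: for any carry $i:\mathbb Z$ and any $x',\bar x',y',\bar y':\mathbb I$, if $x'\equiv_n\bar x'$ and $y'\equiv_n\bar y'$ then $\boxplus(x',y',i)\equiv_n\boxplus(\bar x',\bar y',i)$. The base case $n=0$ is immediate. For the step, write the arguments as $x\concat x''$, $\bar x\concat\bar x''$, $y\concat y''$, $\bar y\concat\bar y''$; the precision assumption forces $x=\bar x$ and $y=\bar y$, so the $0$th digit $\lceil 2i\zadd x\zadd y\rceil$ agrees on both sides, and the new carry $\lfloor 2i\zadd x\zadd y\rfloor$ also agrees on both sides. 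The tails are $\boxplus(x'',y'',\lfloor 2i\zadd x\zadd y\rfloor)$ and its barred counterpart, with $x''\equiv_{n-1}\bar x''$ and $y''\equiv_{n-1}\bar y''$ (prefixes of prefixes), so the inductive hypothesis applies. Note we never need the boundedness of the carry that the case analyses in $\lceil-\rceil$ and $\lfloor-\rfloor$ are engineered to guarantee — only that the \emph{same} carry is threaded through both computations.

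Finally we unfold the definition $(x\concat x')\oplus(y\concat y')=\boxplus(x',y',x\zadd y)$. If $(x\concat x')\equiv_{n+1}(\bar x\concat\bar x')$ then $x=\bar x$ and $x'\equiv_n\bar x'$, and similarly for the second argument; hence the initial carries $x\zadd y$ and $\bar x\zadd\bar y$ coincide, and the auxiliary lemma yields $\boxplus(x',y',x\zadd y)\equiv_n\boxplus(\bar x',\bar y',\bar x\zadd\bar y)$, i.e.\ the outputs agree to precision $n$, as required; the MoC of $\oplus$ for precision $n$ is thus $(n{+}1,n{+}1)$.

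I do not anticipate a genuine obstacle here: the only mild subtlety — and the one step deserving care — is the bookkeeping in the auxiliary induction, namely that the recursion in $\boxplus$ consumes exactly one digit of each argument per output digit while additionally requiring the two heads to form the first carry. This is precisely what produces the ``$+1$'' in the modulus, and getting the index shift and the carry threading right is the substance of the formal proof in the \texttt{IAddition} module; everything else is routine unfolding of the definitions of equality with precision and continuity.
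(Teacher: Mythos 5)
Your proof is correct and follows essentially the same route as the paper: the paper's (very terse) argument is exactly the observation that input precision $(n{+}1,*)$ on each argument yields output precision $(n,*)$, established by induction with the integer carry threaded through $\boxplus$, which is what your auxiliary lemma makes explicit. The only difference is presentational — you spell out the induction and the unfolding of $\oplus$ into $\boxplus$ that the paper leaves implicit.
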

\begin{proof}
This is so because for  any $n : \mathbb N$, $x_1 \concat x_1'$, $x_2 \concat x_2'$, $y_1 \concat y_1'$, $y_2 \concat y_2' : [-1,1]$ and $z: \mathbb Z$,
if $(x_1 \concat x_1',x_2 \concat x_2')\cong_{(n+1,*), (n+1, *)}(y_1 \concat y_1',y_2 \concat y_2')$ then
$x_1'\oplus x_2' \cong_{(n,*)} y_1'\oplus y_2'$.
\end{proof}
\paragraph{Negation}

In the signed-digit representation this operation is simply reversing the sign of each digit. The continuity is immediate.

\paragraph{Multiplication} (Details in module \texttt{IMultiplication})

Let $\odot$ be multiplication on the set of digits $\{-1,0,1\}$ defined in the obvious way.
And let $\otimes$ be the multiplication of a (code of a) real by a digit, defined by mapping $\odot$ over the sequence of digits which is an element of $\mathbb I$.
We use several auxiliary operations defined by mutual recursion.

First consider $p,p',p'':\mathbb I\times\mathbb I\to \mathbb I$:
\begin{align*}
p(x,y) &= p'(x,y)\oplus p''(x,y)\\
p'(x,y)_n &=\begin{cases}
x' \odot y'' & \text{if $n=0$}\\
(y''\otimes x\oplus x''\otimes y)_n &\text{otherwise}
\end{cases}\\
p''(x,y)&= y'\otimes x \oplus x\otimes y\\
&\text{where }x=x'\concat (x''\concat x'''), y=y'\concat (y''\concat y''').
\end{align*}

The second helper function is $q:\mathbb N\to \mathbb I\times \mathbb I\to \mathbb I$:
\[
q(k,x,y)_n = \begin{cases}
x'\odot y' & \text{if }n=0\\
x''\odot y' & \text{if }n=1\\
x''\odot y'' & \text{if }n=2 \\
0 & \text{if } n>2 \text{ and } k = 0 \\
p(x'',y'')\oplus q(k-1,x'',y'')&\text{otherwise}\\
\end{cases}
\]
where $x=x'\concat (x''\concat x''')$, $y=y'\concat (y''\concat y''')$.

Finally, multiplication $\times:\mathbb I\times\mathbb I\to\mathbb I$ is defined as
\[
(x \times y)_n = \bigl(p(x,y)\oplus q(n,x,y)\bigr)_n.
\]

\begin{lemma}
Multiplication is continuous.
\end{lemma}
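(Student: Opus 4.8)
The plan is to establish continuity of $\times : \mathbb I\times\mathbb I\to\mathbb I$ by a modulus-tracking argument that mirrors the definition's mutual recursion, using the already-established continuity of $\oplus$ (and the evident continuity of $\odot$, $\otimes$, and the head/tail splitting) as the base ingredients. Concretely, I would show that for any precision $n:\mathbb N$ there is a precision $m:\mathbb N$ (in fact $m$ can be taken to be an explicit affine function of $n$, something like $n+c$ for a small constant $c$, possibly with a logarithmic correction coming from the global scaling factors introduced by nested $\oplus$'s) such that whenever $x_1\cong_{(m,*)}x_2$ and $y_1\cong_{(m,*)}y_2$ we have $(x_1\times y_1)\cong_{(n,*)}(x_2\times y_2)$. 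Since the exactness type of $\mathbb I$ is $\mathbb N\times\mathds 1$, establishing this suffices.

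The key steps, in order, would be: first, record the continuity of the auxiliary building blocks --- $\odot$ is a function on a finite type hence continuous, $\otimes$ is continuous because it is $\odot$ mapped pointwise (each output digit depends on exactly one input digit), and $\oplus$ is continuous by the preceding lemma with the sharp modulus $(n+1,*)\mapsto(n,*)$. Second, analyse $p'$, $p''$, and hence $p$: the $n$th digit of $p'(x,y)$ for $n\geq 1$ is the $n$th digit of $(y''\otimes x\oplus x''\otimes y)$, so reading $p(x,y)$ to precision $n$ requires reading $x,y$ to precision roughly $n$ plus a small fixed offset (coming from the $\oplus$ in $p=p'\oplus p''$ and from the shift by two digits in the $x=x'\concat(x''\concat x''')$ decomposition). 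Third, handle $q$ by induction on its first argument $k$: the branch $k=0$ is decided by finitely many input digits, and the recursive branch adds one $\oplus$ and one two-digit shift per decrement of $k$, so reading $q(k,x,y)$ to precision $n$ needs input precision bounded in terms of $n$ and $k$; crucially, in the definition of $\times$ we only ever evaluate $q(n,x,y)$ at its own index $n$, and inspection shows that the $n$th digit of $q(n,x,y)$ depends only on a prefix of $x,y$ whose length grows linearly in $n$. Finally, combine: $(x\times y)_n=(p(x,y)\oplus q(n,x,y))_n$, so by the modulus of $\oplus$ and the two preceding bounds, an input precision $m$ linear in $n$ controls the first $n$ digits of the product.

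The main obstacle I expect is the bookkeeping around $q$ and the global scaling factors. The function $q$ is genuinely the delicate part: it is defined by recursion on $k$ where each recursive call both shifts the arguments by two digits \emph{and} wraps the result in a $\oplus$, and it is evaluated diagonally (at index $n$ with first argument $n$), so one must be careful that the modulus does not blow up super-linearly. One needs a lemma, analogous in spirit to Lem.~\ref{lem:cons-decrease-modulus2}, stating precisely how the modulus of $q(k,-,-)$ at precision $(n,*)$ relates to that of $q(k-1,-,-)$ after the shift; then the diagonal evaluation is tamed by observing that to produce digit $n$ of $q(n,x,y)$ we descend at most $n$ levels of recursion but each level only consumes a bounded number of fresh digits. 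Once that lemma is in hand, the rest is the kind of laborious-but-routine modulus arithmetic the paper has already flagged, and the \textsc{Agda} formalisation carries the precise constants. I would state the continuity bound with an explicit (if not optimal) modulus function and defer the exact arithmetic to the formal development.
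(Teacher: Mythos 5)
Your proposal follows essentially the same route as the paper's proof: establish continuity of the constituent operators $\odot$, $\otimes$, $p'$, $p''$, $p$ and $q$, reduce the composite cases to the already-proved continuity of $\oplus$, and handle the delicate recursive (`otherwise') case of $q$ by induction on its first argument $k$. Your additional attention to the diagonal evaluation $q(n,x,y)_n$ and to explicit modulus bookkeeping is a more detailed elaboration of the same argument, not a different one.
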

\begin{proof}
This amounts to proving that the constituent operators $p$, $p'$, $p''$ and $q$ are continuous. The question is whether at every precision in the exactness type of $\mathbb I$ there exists some MoC in the exactness type of $\mathbb I$. In the cases where the output is a simple arithmetic operation that relies upon zero or one digits of the input -- for example, the $n = 2$ case of $q$ -- the MoC is clear and easily constructed. In all other cases, the output is the result of the composition of the $\oplus$ operator with other operators that have been proved continuous. As $\oplus$ is continuous, it is clear that an MoC can be constructed in these cases too. The most difficult case is the `otherwise' case of $q$, which relies upon constructing an MoC from the continuity of $\oplus$,$p$ and $q$ itself. However, as the $k$ value decreases, we can construct the MoC from an inductive hypothesis on the continuity of $q$ at differing values of $k$.

The formalisation seems forbidding but the intuition is clear. 
\end{proof}

\paragraph{Positive truncation}
The domain of the normalised loss function is $\uniti$, whereas arithmetic happens in $\mathbb I$, for example in computing least-square-like loss functions.
Since we only require continuity and the vanishing property of the loss function, rather than a precise measure of loss, the simplest way to create
a well-typed loss function is to use a `truncation' function $t:\uniti\to\mathbb I$ which changes all digits $-1$ to 0 and keeps the rest of the digits.
This operation preserves continuity and the key property of the loss function, to be vanishing ($f(0)=0$).

Coming back to our discussion in Remark~\ref{rem:real}, the truncation $t$ is a perfect example of a function that operates strictly at the level of \emph{codes} and is not the realiser of a real function $[-1,1]\to[0,1]$.
This is somewhat unsatisfactory from a foundational perspective, but from an algorithmic (and somewhat pragmatic) point of view it raises not serious issues in our setting.
More meaningful loss functions, which are realisers of real functions and have additional desirable properties (e.g. they are monotonic) can be defined, but at the cost of extra complexity.
\\[2ex]
The operations above, together with Prop.~\ref{thm:composition} which states that continuity is preserved by composition, will allow us to construct arbitrary multi-variate polynomial functions. Going beyond that would require extra operations (division, square root, logarithm, trigonometric functions etc.), for which algorithms in real number computation exists, but are beyond the scope of the present work.

\subsection{Search as degenerate regression}

Using the minimisation algorithm for regression (Thm.~\ref{thm:min}) we can compute, to an arbitrary precision, the solution of any continuous function simply by considering the variable(s) as an unknown degenerate model parameter and least squares as the loss function.

Concretely, let us illustrate this with solving a non-linear system of equations:
\begin{align*}
x^2 &= 0.5x\oplus 1\\
y^3 &= x^2.
\end{align*}
This equation is expressed in terms of real numbers, and we can use the minimisation algorithm of Thm.~\ref{thm:min} to look for approximate solutions in $\mathbb I$ with some given precision.
As it happens, the solutions to this equations are both in $[-1,1]$ ($x=-0.851199\ldots$ and $y=0.898161\ldots$).

In the notation of Thm.~\ref{thm:min}, the parameter type $S=\mathbb I\times\mathbb I$, which is an \stypet-type, and $Y=\mathds{1}\to\mathbb I\times\mathbb I\simeq \mathbb I\times\mathbb I$, with $\mathds{1}$ the unit type, which is an \otype-type.
  This is why we call this `degenerate' regression, because the oracle type is not a function type. 

  The model `function' is now a constant:

  \[
M_{(x,y)} = \bigl(
x^2\oplus-(0.5x\oplus 1),
y^3\oplus-x^2
\bigr),
\]

The true (degenerate) oracle is the constant $\Omega=0$.
The loss function is $\Phi : \mathbb I\times\mathbb I\to \mathbb I\times\mathbb I\to\uniti$, $\Phi\bigl(\_,(u,v)\bigr)= t(u\times u\oplus v\times v$).

Since all the types involved are $\mathfrak S$-types and all the functions continuous, as the composition of continuous operations, it is an immediate consequence that the `parameter' $(x,y):\mathbb I\times\mathbb I$ can be computed for whatever precision $p:\mathbb N$.
The minimiser used in the theorem is a possible such algorithm.

Two caveats are required. The first one is that regression will compute the `\textit{argmin}' of the function, so it will return one of the solutions if they exist.
This has been already discussed in the general setting in Remark~\ref{rem:real}.
In this example both real solutions are in $[-1,1]$.
The algorithm does not control which one will be returned.
The second one is that in the case of no solution the minimisation algorithm will still return some $(x_0,y_0)$ value of the \textit{argmin}, so the model itself must be
used to test whether the loss value is close enough to zero to be considered a solution.
Whether a returned pair is an actual solution, i.e. $\Phi(\Omega,M_{(x_0,y_0)}) = 0$ is, of course, not decidable because
equality is not decidable in $\uniti$.

\subsection{Polynomial regression}
\label{sec:polyreg}

This is the `meat and potatoes' motivating example. Consider a set of points $(x_i,y_i) \in \mathbb R^2$, $i\leq n$. And suppose that we want to `best fit' a polynomial $f_{\vec k}(x)=k_0+k_1x+\cdots k_mx^m : \mathbb R \to \mathbb R$ through this data set, i.e. find values for  $\vec k\in \mathbb R^{m+1}$  which minimise a loss function such as least squares.

One apparent obstacle is that all the convergence theorems require an \textit{oracle} to compute the parameter $\vec k$, whereas we only have a set of points. An important observation is that the least square loss function computes the loss only at the given data-points and ignores its behaviour elsewhere. So any `oracle' constructed from the points which is continuous would ultimately lead to the same result.

To construct such an oracle we can use interpolation. There are many interpolation algorithms but for our purpose we might as well take the simplest one: piece-wise constant interpolation. 

Let $p$ be some fixed precision $p:\mathbb N$ and $y_{n}:\mathbb I$ an arbitrary value.
We define a (distorted) oracle from the data points ($x_i, y_i:\mathbb I$):

\[
\Psi_\Omega(x)=\begin{cases}
y_0 & \text{if } x <_p x_0\oplus x_1\\
y_1 & \text{if } x <_p x_1\oplus x_2\\
\vdots\\
y_{n-1} & \text{if } x <_p x_{n-1}\oplus x_n \\
y_n & \text{otherwise.}
\end{cases}
\]

The definition assumes that the data points are sorted by the $x_i$ component.
This function is defined by cases noting that the order in which the conditions are tested is fixed, top-to-bottom. This makes the function well defined, computable, and, perhaps surprisingly, continuous.
(In fact Thm.~\ref{thm:imperfect} does not require the oracle to be continuous, just Thm.~\ref{thm:congstyle}.)
The real issue is not continuity but why the function is well defined.
The function is defined piecemeal, but if $x$ is closer to some $x_i\oplus x_{i+1}$ than the precision $p$ then we cannot say for sure if it is to the left or to the right of it.
In this situation the fact that the side-conditions are checked in a defined order means $x$ will be considered as if it is to the left of the $x_i$, which makes the function well defined. 
Note that this means that this function is also not a realiser of a continuous function $\mathbb R\to \mathbb R$, an issue which we discussed before (Remark~\ref{rem:real}).

The general property of regression (Thm.~\ref{thm:min}) guarantees that parameters $\vec k$ can be computed so that the interpolated model $M_{\vec k}$ will minimise the least-square error at each $x_i$. It is interesting to also consider what this means from the point of view of convergence. The perfect-model convergence theorem (Thm.~\ref{thm:perfect}) is not applicable since the general form of the oracle (line segments) and of the model (polynomial) are not the same.

However, the general imperfect-model convergence theorem (Thm.~\ref{thm:imperfect}) says that if the loss between a distorted model and the true model vanishes then so does the loss between the true model and the regressed model. In this case the true oracle would be a same (or lower) degree polynomial from which the data points are sampled then interpolated, resulting in the distorted oracle. Since the least squares loss function only considers the behaviour at the sample points, it will be zero when applied to the true and distorted oracles. Which means that Thm.~\ref{thm:imperfect} guarantees that in this situation the loss between the true oracle and the model can be also made arbitrarily small.
From which we can conclude that polynomial regression, as performed in practice, has good convergence properties.

The possibly problematic aspect of this is not the use of a polynomial as a model but the fact that we are working `offline' (from data) as opposed to `online' (from the oracle). 
But the correctness of `offline' regression is an immediate corollary of Thm.~\ref{thm:imperfect}

\begin{proposition}[\texttt{Examples.offline-regression}]\label{prop:offline}
    Let  $S$ be an $\mathfrak S$-type, $Y=\mathbb I\to\mathbb I$, $p:\mathbb N$ a precision, $\epsilon :\uniti$ a loss value, points $\vec x:\mathbb I^n$, $n:\mathbb N$, $\Phi_{\vec x}$ a least-squares loss function, and $\Psi_{\vec x}:Y\to Y$ a constant interpolation function, both defined at points $\vec x$. 

  There exists a regressor $\mathrm {reg}:(S\to Y)\to Y\to S$ such that given a weakly continuous model $M:S\to Y$ 
  for parameter $k=\mathrm{reg}\,M\,\Psi_{\vec x}(\Omega)$:
  \[
  \text{if }\Phi_{\vec x}(\Psi_{\vec x}(\Omega), \Omega)<_p\epsilon \text{ then } \Phi_{\vec x}(\Psi_{\vec x}(\Omega), M_k)<_p\epsilon,
  \]
  for synthetic oracle $\Omega = M(k_0)$.
\end{proposition}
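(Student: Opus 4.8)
This proposition is a direct specialisation of the general imperfect-model convergence theorem (Thm.~\ref{thm:imperfect}), so the plan is to instantiate that theorem and then discharge the two hypotheses that the data-driven ingredients impose. I would apply Thm.~\ref{thm:imperfect} with the given $\mathfrak{S}$-type $S$, the $\mathfrak{O}$-type $Y=\mathbb I\to\mathbb I$, the precision $p$, the loss value $\epsilon$, the loss function $\Phi:=\Phi_{\vec x}$, and the distortion function $\Psi:=\Psi_{\vec x}$. The regressor $\mathrm{reg}$ and the parameter $k=\mathrm{reg}\,M\,\Psi_{\vec x}(\Omega)$ are then precisely those produced by Thm.~\ref{thm:imperfect}, and for any given $k_0:S$ and weakly continuous $M:S\to Y$ the implication $\Phi_{\vec x}(\Psi_{\vec x}(\Omega),\Omega)<_p\epsilon \Rightarrow \Phi_{\vec x}(\Psi_{\vec x}(\Omega),M_k)<_p\epsilon$ is its conclusion verbatim, with $\Omega=M(k_0)$. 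A point worth making explicit is that Thm.~\ref{thm:imperfect} imposes no continuity (or realiser) requirement on the distortion function, which is exactly what is needed here, since $\Psi_{\vec x}$ --- being defined by an ordered, top-to-bottom case split on $<_p$-comparisons --- is well defined and computable but is not the realiser of a continuous real function (cf.~Remark~\ref{rem:real}). Thus nothing about $\Psi_{\vec x}$ needs checking beyond its being a well-defined element of $Y\to Y$, which the fixed order of its side conditions guarantees.

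The only genuine proof obligation is therefore that $\Phi_{\vec x}$ is a continuous loss function in the sense of \texttt{TheoremsBase.continuous$^L$}: it must vanish on the diagonal, $\Phi_{\vec x}(f,f)=0_\uniti$, and for every precision $p$ in the exactness type of $\uniti$ it must admit a modulus $q$ such that $g\approx_q h$ implies $\Phi_{\vec x}(f,g)\equiv_p\Phi_{\vec x}(f,h)$ for all $f:Y$. I would take $\Phi_{\vec x}$ to be the positive truncation $t$ of the iterated midpoint over $i\le n$ of the squared differences $(g(x_i)+(-h(x_i)))\times(g(x_i)+(-h(x_i)))$, evaluated only at the sample points $x_i$, with the constant global scaling factor required for iterated use of $\oplus$ exactly as discussed for real-number addition. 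Continuity then follows by Prop.~\ref{thm:composition} (composition preserves continuity) from the already-established continuity of $\oplus$, of multiplication $\times$, of negation, and of the truncation $t$, together with the elementary fact that for each fixed $i$ the evaluation map $g\mapsto g(x_i)$ from $\mathbb I\to\mathbb I$ to $\mathbb I$ sends $\approx_q$-equal functions to $\equiv_q$-equal values. Composing these moduli over the finitely many indices $i\le n$ and pushing the target $\uniti$-precision $p$ backwards through the truncation, the iterated midpoints, the multiplications and the negations yields the required $q$ in the exactness type of $\mathbb I$. The vanishing property is immediate: at $f=g$ each squared difference is $0_\uniti$, and both iterated midpoint and $t$ preserve $0_\uniti$.

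The step I expect to be the main obstacle is the bookkeeping in this backward propagation of precisions --- in particular through multiplication, whose modulus of continuity is itself defined by a recursion on an auxiliary depth parameter, and through the nested iterated midpoints whose scaling factors must be tracked. This is laborious but entirely routine: no new conceptual ingredient is needed beyond the continuity lemmas already proved for $\oplus$, $\times$, negation and $t$ and their closure under composition, and the \textsc{Agda} development (\texttt{Examples.offline-regression}) discharges precisely this composite-modulus computation.
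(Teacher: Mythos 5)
Your proposal is correct and follows the same route as the paper, which presents this proposition as an immediate corollary of Thm.~\ref{thm:imperfect} with $\Phi:=\Phi_{\vec x}$ and $\Psi:=\Psi_{\vec x}$, relying on the continuity of $\oplus$, $\times$, negation and the truncation $t$ (closed under composition via Prop.~\ref{thm:composition}) to discharge the continuity of the least-squares loss, and on the fact that the theorem imposes no continuity requirement on the distortion. Your explicit treatment of the modulus bookkeeping for $\Phi_{\vec x}$ is exactly the content the paper delegates to the \textsc{Agda} development.
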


From this, the convergence of polynomial interpolation follows immediately as any model defined by a polynomial is continuous. 

\subsection{Universal approximators}

In applications, particularly to machine learning, we may not know the general form of the oracle.
In such a situation we may want to consider a more general kind of model, which is expressed as an \emph{infinite} series, such as power series or trigonometric series. 
Many such series can be written in the form $M_k(x)=\sum_{i\in \mathbb N} f(k_n, x, n)$ where $f$ is a fixed function and $k_n$ an infinite set of parameters.
For example, in the case of a power series $f(k, x, n) = kx^n$.
Such series can serve as `universal approximators' for classes of functions.
For example, analytic functions equal to their Taylor series at all points form a class known as `integral' functions. 
The polynomials, exponential, and certain trigonometric (sine and cosine) functions are examples of integral functions.

These models are intriguing because they can be given types such as $M:(\mathbb N\to \uniti)\to \uniti\to \uniti$, with the type of parameters $k:\mathbb N\to\uniti$ an \stypet-type.
This means that, providing the continuity of $M$ is proved, the \emph{entire} set of parameters $k$ can be computed to any degree of precision.
In the case of the power series we know that using addition, multiplication, and composition always leads to continuous functions.
The problem is computing the infinite series.
Provided that the series converges, the series can be computed in general~\cite{muller1995constructive} or approximated~\cite{devore1993constructive}, but this is beyond the scope of our paper. 
From the point of view of regression analysis this may seem surprising, but this is a known result using searchable sets~\cite{DBLP:journals/logcom/Escardo13}.

For example consider a model $M_k(x)=\bigoplus_{i:\mathbb N}k_ix^i$, which converges for all values of $x:\uniti$.
It can be used to regress some oracle $\Omega:\uniti\to\uniti$.
Using the regressor of Thm.~\ref{thm:perfect}, the sequence of parameters is given by $k_0 = \mathrm{reg}\,M\,\Omega$, so a model can be instantiated as $M_{k_0}$.

Note that the solution above involves an infinite set of parameters so obviously cannot be computed other than lazily.
The model, after instantiating $k$ is
\[
M(x)=\bigoplus_{i:\mathbb N}(\mathrm{reg}\,M\,\Omega)(i)\cdot x^i,
\]
which is computable but could be expensive to compute.

A problem of practical importance in this setting is the `truncation' of the series defining model $M_k$ to only a finite number of terms, i.e. $M_{k,m}=\bigoplus_{0\leq i\leq m}k_ix^i$.
However, such a model has type $M:(\mathbb N\to \uniti)\to \mathbb N \to \uniti\to \uniti$.
The type $\mathbb N$ is not an \stypet-type; it is also clearly not a searchable type.
So this problem cannot be solved.

A broader consequence is that some `hyper-parameters' of neural networks (the number of layers, the number of neurons per layer, etc.) also cannot be computed using our approach.

\begin{remark}
  This class of more speculative examples, in particular summing infinite series, is not formalised in \textsc{Agda}.
\end{remark}

\section{Related work}

This paper has been inspired by and relies extensively on a significant body of work by Escard\'o, starting with \emph{searchable infinite sets}~\cite{DBLP:conf/lics/Escardo07}.
The properties of regression established here can be equally formulated in that setting, or in related setting such as \emph{compact sets}~\cite{DBLP:journals/logcom/Escardo13} and \emph{compact types}~\cite{CompactTypes}.
What makes our approach distinct is that in the formulations above are \emph{not} synthetic, in the sense of~\cite{DBLP:journals/entcs/Escardo04}.
Whereas in synthetic topology all functions are assumed to be continuous, we work with an explicit condition of continuity.
This makes proofs more difficult but it has the advantage that makes our regression theorems hold in more models of type theory, including those that manipulate non-continuous functions, yet allowing for formalisation in a proof assistant based on dependent type theory (namely \textsc{Agda}). 

We are interested in establishing an alternative framework for a better mathematical understanding of data science, machine learning, etc. based on type theory and constructive real numbers. 
It is worth drawing an anaology it with the established mathematical framework for machine learning, \emph{probably approximately correct learning} (PAC)~\cite{DBLP:journals/cacm/Valiant84,DBLP:conf/aaai/Haussler90}.
We first introduce its basic concepts.

Let $X$ be a set and $f:X\to \{0,1\}$ an unknown function (in our terminology, the `oracle'). 
A sample $\vec x$ is drawn from $X$ according to some (unknown) distribution $\mathcal D$ and is correctly classified according to $f$.
Can we learn the function $f$?
Note that this is a particular instance of regression as discussed here.

The function $f$ is not usually guessed out of nothing, but from a known class of possible functions $\mathcal H$, dubbed \emph{inductive bias}.
Our counterpart is, of course, the class of \emph{models} $M$.
The working assumption is that $f\in\mathcal H$, which is mirrored in our approach, in the convergence theorems, by the fact that $\Omega=M_{k_0}$ for some unknown~$k_0$.

Suppose that a \emph{learning procedure} (which we call a `regressor') produces a new hypothesis $h_{\vec x}\in \mathcal H$ based on the sample.
This is what we call a `regressed model' $M_k$.
The basic question is how good is this new hypothesis?
It should be good for the sample, but also for new examples.

The error is defined as $\mathrm{err}(h_{\vec x})=\mathbf{Pr}_{x\sim\mathcal D}[h_{\vec x}(x)\neq f(x)]$, the probability that under the given distributions the unknown function and the hypothesis differ.
The problem statement is that given an error $\epsilon\in(0,1)$ what can it be said about $\mathrm{err}(h_{\vec x})\leq \epsilon$.
This cannot be guaranteed, except with probability at least $1-\delta$ for some fixed parameter $\delta\in(0,1)$. 

A hypothesis class $\mathcal H$ is \emph{PAC-learnable} if there is an algorithm such that for every $\epsilon,\delta\in(0,1)$ and unknown function $f \in\mathcal  H$ there is a natural number $m > 0$ such that when running the algorithm  on an sample of independent examples $x_i, i=0,m$ according to some distribution $\mathcal D$, we obtain an $h_{\vec x}\in\mathcal H$ such that $\mathrm{err}(h_{\vec x}) \leq\epsilon$ with probability at least $1-\delta$.

The size of the sample $m$ given as a function of $\delta^{-1}$ and $\epsilon^{-1}$ is called the \emph{sample complexity}.
Finite sets are a non-surprising example of PAC-learnable and their sample complexity bounds are known.
But certain infinite sets also are PAC-learnable, with sample complexity determined by the so-called \emph{Vapnik–Chervonenkis  dimension} (VC)~\cite{doi:10.1137/1116025}. 

Our approach is complementary to PAC, having certain strengths and weakness (leaving aside the obvious fact that PAC theory is a mature and well explored area of research).
The setting of the problem is similar, up to differences in vocabulary, but both the learning procedure and the validation procedure vary significantly.
PAC requires a prior sampling of the oracle with a given distribution, which makes it intrinsically `off-line', whereas our learning procedure assumes access to the oracle, `on-line'.
(The two are related by Prop.~\ref{prop:offline}, but more about this in the next section.)
It also means that the learning procedure and the testing criterion in PAC are necessarily probabilistic.
In contrast, our approach is deterministic and quantitative in a different way: instead of measuring the \emph{probability} of the learned outcome being different from the desired outcome we measure the \emph{definite amount} by which the two outcomes differ.
For finite sets, which can be searched trivially, our approach is trivial whereas the PAC is interesting.
But for infinite sets both our approach and the PAC approach give interesting and non-trivial characterisations. 

\section{Conclusions and further work}

The main contribution of the paper is to offer a range of convergence criteria for parametric regression, formalised in type theory, and proved formally in \textsc{Agda}.
The main convergence theorem (Thm.~\ref{thm:imperfect}) states that a large class of oracles, all continuous functions of \otype-type with unknown parameters of \stypet-type, can be regressed up to any desired precision, even in the presence of distortions, so long as the distortions are small.
The regressors used in the theorem can be considered as correct, albeit inefficient, reference implementations that satisfy the conditions of the convergence theorem.

The next part of this work will requires us to turn our attention to off-line learning.
The starting point is Prop.~\ref{prop:offline} which gives a convergence criterion for off-line regression.
The interesting part is the precondition $\Phi_{\vec x}(\Psi_{\vec x}(\Omega), \Omega)<_p\epsilon$.
We conjecture that there if the sample $\vec x$ is large enough then this precondition is always true.
The reason is that the distorted oracle $\Psi_{\vec x}(\Omega)$ constructed by interpolation should become close enough to the true oracle $\Omega$ as the sample grows, which is a version of the Stone-Weierstrass interpolation theorem.
Our simple interpolator (piece-wise constant) may not be suitable for such a theorem, but we strongly believe that such interpolators exist in our setting.
Interpolation, as mentioned above, is closely related to sampling, which could open the door to dealing with probabilistic sampling and formulating convergence results more closely related to PAC learning, including estimating or bounding the sample size.
The fact that probabilities over discrete sets such as $\mathbb N\to[0,1]$ are \stypet-types is encouraging. 
In the longer term we also wish to find (synthetic) topological or type-theoretic characterisation of other PAC concepts, such as VC dimension.

Interpolation in itself is very important because, especially in the presence of distortions (noise), as it forms the basis of \emph{non-parametric regression}, the learning of models without committing to a particular shape of a model.

A better class of interpolation functions should also resolve the foundational rough edges discussed in Sec.~\ref{sec:polyreg}, namely the fact that the interpolated functions are not realisers of real functions.
We do not believe these issues have any profound consequences but are best avoided.
In contrast, the same issues in the context of the minimisation theorem (Thm.~\ref{thm:min}) cannot be solved --- but this theorem is a `dead end' for us.

In parallel we aim to consider more realistic implementations, either extracted from the \textsc{Agda} regressors or implemented directly in other more performance-oriented languages.
The key requirement is fast (enough) arbitrary precision arithmetic over real numbers, a field intensely studied with multiple libraries available for various languages~\cite{plume,DBLP:journals/jlp/Menissier-Morain05,DBLP:journals/tcs/Briggs06}.

\bibliographystyle{abbrv}
\bibliography{arxiv}



\end{document}